\newcounter{subclaim}
\newtheorem{thm}{Theorem}
\newtheorem{lemma}[thm]{Lemma}
\theoremstyle{definition}
\newtheorem{defn}{Definition}
\theoremstyle{remark}
\title{Differential and graphical approaches to multistability theory for chemical reaction networks}
\author{ Mark Lipson \thanks{Written while affiliated with Department of Mathematics, Harvard University.  Email address: mark.lipson@post.harvard.edu} }
\begin{document}
\maketitle

\begin{abstract}
The use of mathematical models has helped to shed light on countless phenomena in chemistry and biology.  Often, though, one finds that systems of interest in these fields are dauntingly complex.  In this paper, we attempt to synthesize and expand upon the body of mathematical results pertaining to the theory of multiple equilibria in chemical reaction networks (CRNs), which has yielded surprising insights with minimal computational effort. Our central focus is a recent, cycle-based theorem by Gheorghe Craciun and Martin Feinberg, which is of significant interest in its own right and also serves, in a somewhat restated form, as the basis for a number of fruitful connections among related results.
\end{abstract}

\section{Introduction}

Chemical reactions have long been a popular and fruitful subject for exploration with mathematical models.  A wide range of behaviors can emerge from the evolution of a chemical system over time, and these can frequently be explained in a useful way by analyzing a dynamical model.  Even in complex biological systems, one often finds it possible to isolate tractable networks of interacting proteins and to shed light on certain of their fundamental properties using fairly simple chemistry and mathematics.

In this paper, we will focus on the capacity or incapacity of chemical reaction networks (CRNs) to admit multiple equilibria in their dynamics.  Naturally, the topic of multistability (or its opposite, injectivity, depending on one's perspective) has attracted interest over the years in the chemistry, physical chemistry, and chemical engineering literature \cite{Handwave, ClarkeLong, ClarkeShort, CF1, CF2, CF3, Feinberg1, Feinberg2, Hunt, SCL}.  More recently, biologists have also come to recognize the importance of multistability for living systems, especially on the molecular scale, where the presence of multiple equilibria can be associated with switch-like processes and biological memory, particularly in cell differentiation and development.  Specific areas of study have included gene networks \cite{switch, Toggle, Bose, Soule2}, enzyme catalysis and induction \cite{CF2, CFPNAS, Degn, LAC}, and cellular signaling \cite{Bhalla, FX, JG1, HuangFerrell, Immune, Markevich, Pom}.  A number of the results we will discuss have also come from farther afield, with applications ranging from ecology~\cite{Eco} to economics \cite{BMQ, GN, Parth, Sam} to probability theory \cite{Parth}.



In describing a CRN as a dynamical system, we will be dealing with the ways in which some state variables - the concentrations of chemical species in the system - influence the rates of change of others over time.  One species can activate or inhibit a second species, which can in turn have an effect on a third, and so on.  In the study of multistability, we will be particularly concerned with the (usually indirect) influence that one state variable exerts on its own value.  This kind of \textquotedblleft cycle" (as we will call it) is often easy to describe qualitatively, simply by examining the reactions in the network, and in many cases this proves to be enough to yield important and subtle information about the properties of the network \cite{CF1}.  For example, if we were to observe multistability in a chemical process composed of several unknown steps, we could apply the results in this paper to help us decide among proposed alternatives for the underlying series of steps, even if the possibilities were superficially very closely related.   


A recent theorem by Gheorghe Craciun and Martin Feinberg \cite{CF1, CF2} is the central focus of our paper.  Over the last twenty years, Feinberg has been the leading developer of \textquotedblleft chemical reaction network theory," or CRNT.   His earlier work on multistability was based on a property known as the \emph{deficiency} of a network \cite{Feinberg1, Feinberg2, Leib}, but more recently, he and his colleagues have focused on cycle-based conditions~\cite{CF1, CF2, CF3, CFPNAS, SCL}, which can be compared and combined with other similar results in a number of productive ways.\footnote{While deficiency theory is certainly a worthy subject for study, the mathematics behind it is fairly specialized, and we will not be dealing with the subject here.  Similarly, we will not delve into the stoichiometric network analysis (SNA) of Bruce Clarke \cite{Handwave, ClarkeLong, ClarkeShort}.}  Feinberg has not been widely cited in either the biology or the mathematics literature, so one of our goals is to bring attention to what has been an underappreciated body of work.


We begin with a series of definitions and a discussion of some of their implications in sections~\ref{Notation}~-~\ref{remarks}, followed by some general results in section~\ref{DS} regarding global injectivity.  Section~\ref{graphsection} introduces graphs and cycles and the Thomas-Soul\'{e} theorem, and section~\ref{CFsection} presents a generalized and substantially reformulated exposition of Craciun-Feinberg.  We conclude with a discussion of the relationships among the various Jacobian-based theorems on multistability in section~\ref{compare} and an example of the theory in action in section~\ref{MAPK}.

\section{Preliminary definitions and notation}\label{Notation}

Imagine a container, either a laboratory vessel or a living cell, within which some chemical reactions take place at constant temperature.  Let the active chemical species be $S_1, S_2, \dots, S_n$, and suppose there are $m$ reactions that occur among them.  The~$j^{th}$ reaction is given by $c_{j1}S_1 + c_{j2}S_2 + \dots + c_{jn}S_n \to d_{j1}S_1 + d_{j2}S_2 + \dots + d_{jn}S_n$, where the $c_{ji}$ and $d_{ji}$ are nonnegative integers.  Some reactions may be reversible, in which case we use the shorthand $c_{j1}S_1 + c_{j2}S_2 + \dots + c_{jn}S_n \leftrightharpoons d_{j1}S_1 + d_{j2}S_2 + \dots + d_{jn}S_n$.  For now, we will consider the forward and reverse directions to be two separate reactions.

At time $t$, we write $x_i(t)$ for the molar concentration of species $S_i$ in the container, with $x_i(t) \in \mathbb{R}^+$ for each $i$ and all $t$, and let $\mathbf{x}$ be the vector $(x_1, x_2, \dots, x_n)$.  (At times, we will also use the notation $[S_i]$ to denote the concentration of $S_i$.  These quantities change over time due to the reactions that take place, which we will assume are elementary and are thus governed by mass-action kinetics (see section~\ref{massaction}).  This means that the rate (in concentration per time) of reaction $j$ at a certain point in time is given by $k_jx_1^{c_{j1}}x_2^{c_{j2}}\dots x_n^{c_{jn}}$, where $k_j$ is a positive constant, referred to as the \emph{rate constant} for the reaction.  From stoichiometry, then, we know that the rate of change of $x_i$ due to the effects of reaction~$j$ is $(d_{ji}-c_{ji})k_j \prod_{i=1}^{n}x_i^{c_{ji}}$.  To simplify the notation, let us write $\mathbf{c_j}=(c_{j1},c_{j2}\dots , c_{jn})$, $\mathbf{d_j}~=~(d_{j1},d_{j2}\dots , d_{jn})$, and $\mathbf{u^v}=\prod_{\alpha}u_{\alpha}^{v_{\alpha}}$ for arbitrary vectors $\mathbf{u}=(u_{\alpha})$ and $\mathbf{v}=(v_{\alpha})$.  Summing over all reactions yields the complete rate equation \[\dfrac{dx_i}{dt} = \displaystyle\sum_{j=1}^{m}(d_{ji}-c_{ji})k_j\mathbf{x^{c_j}}.\]

Note that a reaction will only contribute to this sum when $c_{ji}$ and $d_{ji}$ are different.  As a special case, if $S_i$ does not participate in reaction $j$, then $c_{ji} = d_{ji} = 0$, and reaction~$j$ does not affect the concentration of $S_i$, as we should expect.  Henceforth, we will make the assumption that in fact $c_{ji}$ and $d_{ji}$ cannot both be positive, i.e. no species appears on both sides of the same reaction.  To make the notation cleaner, we will write $e_{ji} = d_{ji}-c_{ji}$ and $\mathbf{e_j}=(e_{j1},e_{j2}\dots , e_{jn})$, remembering that either $d_{ji}$ or $c_{ji}$ must be zero.   In terms of chemistry, this means that we require the processes inside the container to be broken down into sufficiently simple mechanisms so that each reactant in a given reaction undergoes a chemical change.\footnote{For further discussion, including the issue of catalysis, see section~\ref{Cat}.}

The changes in the chemical species over time can be condensed into a single function as follows:

\begin{defn} The \emph{vector rate function} for a chemical system is the polynomial function~$\mathbf{F}(\mathbf{x})$ from $(\mathbb{R}^+)^n$ to $\mathbb{R}^n$ defined by $\mathbf{F}(\mathbf{x}) = \left(\dfrac{dx_1}{dt}, \dfrac{dx_2}{dt}, \dots, \dfrac{dx_n}{dt}\right)$. \end{defn}

From $\mathbf{F}$, we define the associated Jacobian matrix $\mathbf{J_F}=(J_{ik})$, where
\begin{eqnarray*}
J_{ik}=\dfrac{\partial\mathbf{F}_i}{\partial x_k} = \dfrac{\partial}{\partial x_k} \left(\dfrac{dx_i}{dt}\right) = \displaystyle\sum_{j=1}^{m}\dfrac{c_{jk}}{x_k}e_{ji}k_j\mathbf{x^{c_j}}.
\end{eqnarray*}
We shall see in section~\ref{Domain} that it is valid to assume that the $x_i$ remain positive for all~$t$, so it is not a problem for them to appear in the denominators.  Denote by $J_{ijk}$ the $j^{th}$ term in $J_{ik}$: \[J_{ijk} = \dfrac{c_{jk}}{x_k}e_{ji}k_j\mathbf{x^{c_j}}.\] \noindent This quantity represents the effect of $S_k$ on the concentration of $S_i$ via reaction $j$.

\section{Remarks on the definitions}\label{remarks}

The framework introduced in section~\ref{Notation} contains several assumptions with important mathematical implications.  Here we will briefly discuss the chemical and mathematical rationale for several of our premises.

\subsection{Uniformity and continuity}

We will assume that our reaction vessel is isothermal and spatially homogeneous, and that the reactions take place in the liquid phase at constant density.  These conditions ensure that the only time-dependent quantities in the system are the concentrations of the reacting species.  In practice, depending on what type of container we are working with, the conditions may be not be perfectly satisfied: for example, a living cell may be close to isothermal, but it is far from well-mixed, although we often do not know enough to justify more precise assumptions.  Our hope is that networks of interest take place within a sufficiently uniform environment so that the qualitative behaviors of interest are unaffected by disregarding the heterogeneities.

A related assumption we will make is that the $x_i$ are continuous (in fact, differentiable) functions of time.  In real life, molecules exist in discrete units, and cells may have extremely small numbers of certain proteins, meaning that stochastic modeling can sometimes be more appropriate \cite{Bose}.  However, it is standard to assume that concentrations are continuous - especially in the lab - and this will be an essential condition for the results we present.  Once we assume continuity, differentiability follows naturally from the explicit rate functions.


\subsection{Mass-action kinetics}\label{massaction}

As stated in section~\ref{Notation}, we will be assuming that all of our reactions are written at the elementary level, where the law of mass action kinetics makes good physical sense.  Take, as an example, an elementary reaction $A + B \to C$.  One molecule of $A$ will combine with one $B$ to form a $C$ whenever the two reactant molecules collide with a certain energy and orientation.  If we assume that the probability of a collision is proportional to the abundances of the two species, then we immediately have the expression $k[A][B]$ for the reaction rate, where $k$ is a constant.  This is only an approximation, and in certain physical situations it will be more valid than others, but it is almost universally accepted \cite{MathPhys}.

Most of the time, especially in biology, an observer will see a composite process taking place rather than the mechanisms underlying it (see section~\ref{Cat} for a basic example), and it can be very difficult to write down all of the elementary reactions involved in a CRN.  The results in this paper will often be useful in discriminating among several proposed mechanisms for a process, based on the behaviors that each would permit according to the theory \cite{SCL}.

Mathematically speaking, the most important consequence of the law of mass action is that vector rate functions arising from CRNs will always be polynomials.  

\subsection{Catalysis and the positions of species in reactions}\label{Cat}

We assumed in section~\ref{Notation} that no species appears on both sides of the same reaction.  Our claim here is that any proposed reaction with a species as both a reactant and a product can be rearranged or decomposed into a simpler series of steps.

As an example, suppose we had a proposed reaction $A + B \to 2A + 2C$.  It could be that one molecule of $A$ is a spectator and does not participate chemically; in this case, the true reaction mechanism is $B \to A + 2C$.  It could also be that the molecule of $A$ on the reactant side undergoes a preliminary, implicit chemical change, so that the true mechanism consists of multiple steps, perhaps $A + B \to D + 2C$ followed by $D \to 2A$.

We note in particular that the conversion of a substrate $S$ into a product $P$ through the action of an enzyme or other catalyst $E$ can be decomposed into multiple reactions in this way.  The total effect of such reactions is $E+S \to E+P$, but the rates that are observed in catalysis are not of the form $k[E][S]$, so there must be some intermediate steps involved.  The most common model is the Michaelis-Menten mechanism, which involves a reversible enzyme-substrate-complex step: $E+S\leftrightharpoons ES \to E + P$.\footnote{In fact, the reaction $ES \to E + P$, like most chemical reactions, will probably be slightly reversible, but given the energetics of catalysis, it is almost always assumed to be unidirectional, especially when the product $P$ is being removed from the area of formation \cite{MathPhys}.}  Assuming that the concentration of $E$ is very small compared to that of $S$, the full reaction rate (i.e. for the formation of P) under this model is given by $k_1 [S]/(k_2+[S])$, where $k_1$ and $k_2$ are constants, and this expression matches observations quite well \cite{BH}.  Thus, while the process begins and ends with one molecule of $E$ present, in none of the elementary reactions does~$E$ or any other species appear on both sides.

Some of the best-known instances of multistability in biochemistry involve enzyme-catalyzed reactions, particularly when several enzymatic modules are coupled \cite{CF2, JG1}, as we will see, for example, in section~\ref{MAPK}.

\subsection{The species domain}\label{Domain}

It is clear that in the systems we are considering, the state variables, which represent concentrations of chemical species, cannot be negative.  In fact, the assumptions of continuity and mass-action kinetics allow us to restrict the domain to $(\mathbb{R}^+)^n$, as we stated in section~\ref{Notation}; this will be useful to us at several points in the discussion.\footnote{Craciun and Feinberg make this assumption as well, although without including a detailed analysis.}

If $x_i$ is initially zero for some $i$ (say at time $t = 0$) and $S_i$ is not produced via any of the reactions in our network, then $x_i$ will always be zero, and we can simply ignore~$S_i$ and all reactions in which it partcipates.  Otherwise, assume that~$x_i(0)>0$ and that~$x_i(t_0)=0$ for some $t_0>0$, where we are given some initial conditions such that the $x_i$ have finite solutions through time $t_0$.  We can choose $t_0$ such that $x_i(t_0-\epsilon) > 0$ for any $\epsilon > 0$.  As in section~\ref{Notation}, we have $dx_i/dt = \sum_{j=1}^{m}e_{ji}k_j\mathbf{x^{c_j}}$.  Since $S_i$ is assumed never to appear on both sides of the same reaction, we know that $e_{ji} < 0$ if and only if $\mathbf{x^{c_j}}$ contains a factor of~$x_i$, as these two conditions hold exactly when $S_i$ is on the reactant side of reaction~$j$.  So, we can write $dx_i/dt = -x_i \cdot P_1(t)+P_2(t)$, where $P_1$ and $P_2$ are polynomials with positive coefficients in terms of the $x_l$ (and thus implicit nonnegative functions of $t$) and $P_2$ does not contain any factors of $x_i$. If $0 < \epsilon < t' < t_0$, then dividing by $x_i$ and integrating yields, up to a constant, \[\ln x_i(t') = \displaystyle \int_{\epsilon}^{t'} \dfrac{dx_i}{x_i} = \displaystyle \int_{\epsilon}^{t'}\left(-P_1(t)+\dfrac{P_2(t)}{x_i}\right)dt.\]

As $t' \to t_0$, we have $x_i \to 0^+$, and so $\ln x_i \to -\infty$.  This means that $P_1(t') \to \infty$, in order to make the right-hand side approach $-\infty$, and hence the concentration of some species goes to infinity in finite time.  This is a contradiction, so we conclude that the concentration of $x_i$ must remain strictly positive for as long as the system has solutions.   



\section{Derivatives and conditions for injectivity}\label{DS}

In chemistry or thermodynamics, or even in everyday usage, an equilibium is a state of balance among the constituents of a system.  We will be interested in studying the equilibria (also referred to as the steady states) of a reaction network, specifically to determine whether or not there may be more than one.  To be precise, an \emph{equilibrium point} is a concentration vector $\mathbf{x}$ such that $\mathbf{F}(\mathbf{x}) = \mathbf{0}$.\footnote{The notion of steady states is sometimes taken to include limit cycles, but we will restrict the definition here to points.}  
Equivalently, if a CRN starts at an equilibrium point and evolves in time, it will not move from this point, as any change in a concentration due to some reaction is balanced by the other reactions in the network.

We will say that a network and its rate function are \emph{injective} if $\mathbf{F}(\mathbf{x_1}) \neq \mathbf{F}(\mathbf{x_2})$ whenever $\mathbf{x_1} \neq \mathbf{x_2}$.  This is the standard definition of an injective function.  Note that an injective CRN cannot have mulitiple equilibria.

For a two-dimensional dynamical system $dx/dt = f(x,y)$ and $dy/dt~=~g(x,y)$, the easiest way to check for equilibria is usually to graph the \emph{nullclines} $f(x,y) = 0$ and $g(x,y) = 0$ and to find their intersection points.  The stability properties of these steady states will also be fairly easy to determine.\footnote{Stability is a very rich and important problem and one that is obviously relevant to biology, but as such it deserves much more space than we could devote to it here.}  In real life, however, and especially in biochemistry, most CRNs of interest will have more than a few species.  With polynomial rate functions, there might be some hope of using tools from algebraic geometry, but in any case, the set of equations will be intimidating.  In the next few sections of this paper, we will begin to consider the question of how to determine, with minimal effort, when a given network is injective, with derivatives as our primary tool.

\subsection{Simple examples}

In one dimension, if $f: \mathbb{R} \to  \mathbb{R}$ is a differentiable function and $f(a) = f(b)$ with $a < b$, then Rolle's theorem (or the mean value theorem) tells us that $f'(c) = 0$ for some $a < c < b$.  Thus, if $f'(x)$ is nowhere zero, then $f(x)$ is injective.  This is about as simple as the relationship between injectivity and zero-valued derivatives can be; in higher dimensions, the story becomes much more complicated.  


Suppose $F$ is a differentiable, vector-valued function from $\mathbb{R}^n$ to  $\mathbb{R}^n$.  The \textquotedblleft derivative" of $F$ is now a matrix, the Jacobian $\mathbf{J}_F=(J_{ik})$, where $J_{ik}=\partial F_i/\partial x_k$.  By the inverse function theorem, $F$ is locally injective in the neighborhood of points where $\det(\mathbf{J}_F)$ is nonzero, but what can we say about its global properties?  In \cite{GN}, Gale and Nikaid\^{o} provide the following example: let $F:~\mathbb{R}^2 \to  \mathbb{R}^2$ be defined by $F(x, y) = (G(x, y), H(x, y))$, where $G(x, y) = e^{2x}-y^2+3$ and $H(x, y)~=~4e^{2x}y-y^3$.  Then,

\begin{eqnarray*}
\mathbf{J}_F
&=& \begin{pmatrix}
2e^{2x} & -2y  \\
8e^{2x}y &  4e^{2x}-3y^2
\end{pmatrix}\\
\Rightarrow \det(\mathbf{J}_F) &=& 2e^{2x}(4e^{2x}+5y^2) > 0\quad \forall \;(x,y) \in \mathbb{R}^2.
\end{eqnarray*}

However, $F(0,2)=F(0,-2)=(0,0)$.  Thus, even in two dimensions, an everywhere nonzero Jacobian determinant does not imply global injectivity.  In fact, this function was given in 1965 as a counterexample to a weaker conjecture by Samuelson \cite{Sam}, which stated that $F$ is injective if all of the leading principal minors of $\mathbf{J}_F$ - i.e., the determinants of the upper-left $i \times i$ submatrices - are nonvanishing.  So, any theorem of this sort will need to be weaker still.


\subsection{The Gale-Nikaid\^{o} theorem}

Having found a counterexample to Samuelson's conjecture, Gale and Nikaid\^{o} were able to formulate and prove two stronger conditions for global injectivity based on the Jacobian~$\mathbf{J}_F$~\cite{GN}.  Their work has subsequently inspired a number of extensions \cite{Parth}, but the original Gale-Nikaid\^{o} theorem is all that will interest us here.

We will say that $\mathbf{M}$ is a \emph{P-matrix} if \emph{all} of its principal minors (the determinants of the submatrices consisting of all entries with both indices in some given subset of $\{1, 2, \dots, n\}$) are strictly positive and that it is a \emph{weak P-matrix} if the determinant of $M$ is positive and all other principal minors are nonnegative.  

\begin{thm}[Gale-Nikaid\^{o} \cite{GN}]\label{GNThm1}
Let $D$ be a rectangular region of $\mathbb{R}^n$, and let $F: D \to  \mathbb{R}^n$ be a differentiable function.  If $\mathbf{J}_F$ is a P-matrix at all points $\mathbf{x} \in D$, then F is injective.  The same result holds if $\mathbf{J}_F$ is a weak P-matrix for all $\mathbf{x} \in D$, provided that $D$ is open.
\end{thm}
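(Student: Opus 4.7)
The natural route is contradiction, and I would proceed by induction on the dimension $n$. For the base case $n = 1$, the $1 \times 1$ P-matrix condition is $F'(x) > 0$, which makes $F$ strictly monotone and hence injective on any interval.

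For the inductive step, suppose $F(\mathbf{a}) = F(\mathbf{b})$ for distinct $\mathbf{a}, \mathbf{b} \in D$. I would single out a coordinate, say $x_n$. Because $\partial F_n/\partial x_n > 0$ everywhere (diagonal entries of a P-matrix are principal $1\times 1$ minors), the implicit function theorem locally solves $F_n(\mathbf{x}) = F_n(\mathbf{a})$ as $x_n = g(x_1, \ldots, x_{n-1})$. Setting $G_i(y) = F_i(y_1, \ldots, y_{n-1}, g(y))$ for $i < n$, a chain-rule computation identifies $\mathbf{J}_G$ with the Schur complement of $\partial F_n/\partial x_n$ in $\mathbf{J}_F$, and an algebraic fact (verifiable from the quotient identity for determinants) ensures that Schur complements of P-matrices are themselves P-matrices. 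The inductive hypothesis then forces $G$ to be injective, so $(a_1, \ldots, a_{n-1}) = (b_1, \ldots, b_{n-1})$; the strict monotonicity of $F_n$ in $x_n$ then gives $a_n = b_n$, contradicting $\mathbf{a} \neq \mathbf{b}$. The rectangularity of $D$ is what lets me patch the local implicit function data into a single $g$ defined on the relevant slice. The main obstacle is the supporting algebraic lemma — that principal submatrices and Schur complements of P-matrices are P-matrices — which is where the full strength of the hypothesis (as opposed to only leading principal minors being positive, which Samuelson's counterexample rules out) is genuinely needed.

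For the weak P-matrix case, the plan is to reduce to the first part via perturbation, with openness of $D$ ensuring that the perturbed map remains well-defined. Define $F_\epsilon(\mathbf{x}) = F(\mathbf{x}) + \epsilon \mathbf{x}$; for any $k \times k$ principal submatrix $A$ of $\mathbf{J}_F$, expanding $\det(A + \epsilon I)$ as $\epsilon^k + c_{k-1}\epsilon^{k-1} + \cdots + c_0$ with each $c_j \geq 0$ (the $c_j$ are sums of principal minors of the weak P-matrix $\mathbf{J}_F$, hence nonnegative) shows that $\mathbf{J}_{F_\epsilon} = \mathbf{J}_F + \epsilon I$ is a strict P-matrix for every $\epsilon > 0$. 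The first part then gives injectivity of $F_\epsilon$ on $D$. The naive limit $\epsilon \to 0^+$ does not close the argument (limits of injective maps need not be injective), so the hard step here is either to combine the perturbation with a direct Rolle/MVT construction on the segment from $\mathbf{a}$ to $\mathbf{b}$ for $F$ itself, or to invoke a degree-theoretic argument using $\det \mathbf{J}_F > 0$ together with openness; in either case, the openness of $D$ is the essential ingredient distinguishing the weak case from the strict one.
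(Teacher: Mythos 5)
Your route for the strict case diverges from the paper's (and Gale--Nikaid\^{o}'s), which is not an induction on dimension via the implicit function theorem but a reduction to a linear-algebraic lemma: one first shows that a P-matrix $\mathbf{J}_F$ admits no nonzero solution of $\mathbf{J}_F\mathbf{y} \leqq \mathbf{0}$, $\mathbf{y} \geqq \mathbf{0}$, and then that $F(\mathbf{y}) \leqq F(\mathbf{z})$, $\mathbf{y} \geqq \mathbf{z}$ forces $\mathbf{y} = \mathbf{z}$ on the rectangle. The algebra in your version is sound (diagonal entries of a P-matrix are positive $1\times 1$ principal minors; Schur complements of P-matrices are P-matrices by the determinant quotient identity), but the inductive step has a genuine gap: the domain of your reduced map $G$ is the set $U$ of points $(y_1,\dots,y_{n-1})$ over whose fiber in $D$ the equation $F_n = F_n(\mathbf{a})$ has a solution. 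Rectangularity of $D$ and $\partial F_n/\partial x_n > 0$ make that solution unique when it exists, so $g$ is well defined on $U$, but $U$ is merely an open subset of the slice --- it need not be all of the projection of $D$, nor rectangular, nor even connected, and there is no reason for $(a_1,\dots,a_{n-1})$ and $(b_1,\dots,b_{n-1})$ to lie in a common rectangle inside it. Since rectangularity of the domain is not a throwaway hypothesis in this theorem, you cannot apply the inductive hypothesis to $G$ on $U$; this is exactly the difficulty that leads Gale and Nikaid\^{o} to run their induction at the level of the linear inequality system rather than on the nonlinear map itself.

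For the weak case, your perturbation $F_\epsilon = F + \epsilon\mathbf{x}$ and the expansion of $\det(A+\epsilon I)$ with nonnegative coefficients reproduce the paper's observation that $\bar{F}(\mathbf{y}) = F(\mathbf{y}) + \lambda\mathbf{y}$ has a P-matrix Jacobian, and you are right that the naive limit $\epsilon \to 0^+$ does not close the argument. But you leave the decisive step as a disjunction of two unexecuted options. The paper commits to the degree-theoretic one: homotope $F$ through $\bar{F}$ to the translation $G(\mathbf{y}) = F(\mathbf{z}) + \mathbf{y} - \mathbf{z}$, use invariance of the degree of boundary cycles under this homotopy, and then use the fact that this degree equals the sum of the local indices of the preimages of $F(\mathbf{z})$, each of which is $+1$ because $\det(\mathbf{J}_F) > 0$. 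As written, your treatment of this half is a plan rather than a proof, and the first half needs a different organization of the induction to be salvageable.
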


The proof for P-matrices is based on linear inequalities.  In particular, if we write $\mathbf{y} \geqq \mathbf{z}$ to mean that $y_i \ge z_i$ for each component $i$, it is shown first that the only solution to the system $\mathbf{J}_F\mathbf{y} \leqq \mathbf{0}, \; \mathbf{y} \geqq \mathbf{0}$ is $\mathbf{y} = \mathbf{0}$.  Then, more generally, for any $\mathbf{z} \in D$, the system $F(\mathbf{y}) \leqq F(\mathbf{z}), \; \mathbf{y} \geqq \mathbf{z}$ is only satisfied (in $D$) by $\mathbf{y} = \mathbf{z}$.

The proof for weak P-matrices, meanwhile, involves several ingredients from topological degree theory (also known as index theory).  First, if $\mathbf{z}$ is some point in $D$, then the function $G(\mathbf{y}) = F(\mathbf{z})+\mathbf{y}-\mathbf{z}$ clearly only takes the value $F(\mathbf{z})$ at $\mathbf{y} = \mathbf{z}$.  Also, the Jacobian of the function $\bar{F}(\mathbf{y}) = F(\mathbf{y}) + \lambda\mathbf{y}$ for $\lambda > 0$ is a P-matrix, so $\bar{F}$ is injective.  This allows us to construct a homotopy from $F$ to $G$ via $\bar{F}$, and hence the degrees of boundary cycles of $D$ are the same under $F$ and $G$.  The last useful fact is that these degrees equal the sums of the degrees of all points $\mathbf{y} \in D$ with $F(\mathbf{y}) = F(\mathbf{z})$ (respectively for $G$).\footnote{As observed in \cite{Text}, this theorem, which is cited as the \textquotedblleft Kronecker theorem on indices" in \cite{GN}, is remarkably similar to Gauss's law in electrostatics.}  Since $\det(\mathbf{J}_F)$ is positive and $\det(\mathbf{J}_G)$ clearly is as well, all of these degrees are~1.  Thus, because $G$ is injective on $D$, so is $F$.

\subsection{Polynomial maps and the real Jacobian conjecture}

The results we have seen so far are very general, applying to all differentiable functions.  The vector rate function for a CRN, however, has a specific form.  First and foremost, if we assume mass-action kinetics, then each component of the rate function is a polynomial in terms of the concentrations of the $n$ chemical species.  Because polynomials tend to be particularly well behaved, we might hope for a more powerful injectivity theorem in this special case.\footnote{For example, Campbell \cite{Campbell} has shown that Samuelson's conjecture holds for polynomial functions $F: \mathbb{R}^n \to \mathbb{R}^n$.}   Indeed, a great deal of effort has been devoted to the \textquotedblleft Jacobian conjecture" (for complex functions) and the related \textquotedblleft real Jacobian conjecture," which posits that a polynomial function $F: \mathbb{R}^n \to \mathbb{R}^n$ with nonvanishing Jacobian determinant is globally injective.  In 1994, however, Pinchuk proved that we are not so lucky, providing a counterexample, a polynomial function $F = (p,q): \mathbb{R}^2 \to \mathbb{R}^2$ with $\deg(p) = 10$ and $\deg(q) = 35$ that has a nowhere zero Jacobian determinant but is not injective \cite{Pinchuk}.

With the real Jacobian conjecture shown to be false, others have attempted to prove a modified version (see \cite{VDE} for a thorough reference).  One easy case is when all of the components of $F$ are first-degree polynomials: then $F$ is a linear transformation, and it is well known that $F$ is injective if and only if $\mathbf{J}_F$ is nonsingular.  If $F$ is quadratic, the conjecture still holds \cite{Wang}, and the proof is quite simple \cite{Oda}.  Beyond this, however, the theory becomes much harder; an example is the following, by Jelonek \cite{Jelonek}.  For a continuous map $F: X \to Y$, we say that $F$ is \emph{proper} at a point $y \in Y$ if there exists a neighborhood $U$ of $y$ such that $F^{-1}(\bar{U})$ is compact.  Let $S_F$ be the set of points in $Y$ at which $F$ is not proper.  Jelonek's theorem states that if $F: \mathbb{R}^n \to  \mathbb{R}^n$ is a polynomial map with nowhere zero Jacobian determinant and $S_F$ has codimension at least 3, then $F$ is injective.

This theorem and others like it, while providing a variety of potentially useful results on injectivity, tend to include conditions that are very difficult to apply to CRNs, for example in relation to the properties of the set $S_F$.  It is also significant that they are based on the requirement that $\det(\mathbf{J}_F) \ne 0$ for all $\mathbf{x} \in \mathbb{R}^n$ (or $\mathbb{C}^n$), whereas our domain of interest is only $(\mathbb{R}^+)^n$.

\section{Graphical conditions and Thomas-Soul\'{e}}\label{graphsection}

So far, we have seen the important role that the Jacobian of the vector rate function plays in determining the number of equilibria of a CRN.  In practice, though, the Jacobian determinant will be an immensely complicated polynomial, and finding its zeros will be a daunting task.  Might there be a way to use some simpler properties of the chemical system to determine, with limited effort, whether or not the Jacobian determinant can vanish in a certain domain and with certain rate constants?  Naturally, we expect any such conditions to sacrifice some generality for the sake of elegance.  In the sections that follow, we examine some approaches based on a network's interaction graph.

\subsection{The interaction graph and its cycles}

\begin{defn} The \emph{interaction graph} $G(\mathbf{x})$ of a CRN is the oriented graph with one vertex for each species $S_i$ and an edge from $S_{i_1}$ to $S_{i_2}$ ($i_1 \ne i_2$) if and only if $J_{i_2i_1}$ is nonzero.  Each edge is given a sign, which is the sign of $J_{i_2i_1}$. \end{defn}

Note that the signs of the edges in the interaction graph can change for different concentrations $\mathbf{x}$.  We will refer to an edge for which this occurs as \emph{concentration-ambiguous}.  A concentration-ambiguous edge from $S_{i_1}$ to $S_{i_2}$ will disappear when  $J_{i_2i_1} = 0$, but because our results are stated in terms of the possible configurations of $G(\mathbf{x})$ for any~$\mathbf{x}$, this will not matter.


Although the diagonal terms $J_{ii}$ of the Jacobian are usually nonzero, we omit self-edges from the graph because they are not used to form cycles (Definition~\ref{cycledef}).  This is another way of saying that autocatalysis and autoinhibition are expressed indirectly at the mass-action level (as in section~\ref{Cat}).  As we will see, diagonal terms are always nonpositive (and, under certain assuptions, always negative).

\begin{defn}\label{cycledef} A \emph{cycle} in an interaction graph is an ordered subset $(i_1, i_2, \dots, i_N)$ of $(1, 2, \dots, n)$ with $1 < N \le n$ such that there exists an edge from $S_{i_1}$ to $S_{i_2}$, from $S_{i_2}$ to $S_{i_3}$, and so on, including fron $S_{i_N}$ to $S_{i_1}$.  A cycle is classified as \emph{positive} or \emph{negative} based on the product of the signs of all of its edges. \end{defn}

In some examples, when the species names are more prominent than the indices, we will also use the notation $S_{i_1} \Rightarrow S_{i_2} \Rightarrow \dots \Rightarrow S_{i_1}$ to indicate a cycle.  Two cycles are \emph{disjoint} if they have no vertices in common; otherwise, they \emph{intersect}.   

We will be interested not only in the sign of a cycle but also in its \emph{sub-sign}.  For an edge from $S_{i_1}$ to $S_{i_2}$, we define its sub-sign, for some chosen reaction $j$ that contributes a term $J_{i_2ji_1}$ to $J_{i_2i_1}$, to be the sign of $J_{i_2ji_1}$.  The sub-sign of a cycle is then the product of the sub-signs of its edges, given $N$ chosen reactions: sub-positive if there are an even number of sub-negative edges and sub-negative if there are an odd number of sub-negative edges.  The sub-sign of an edge could depend on the choice of $j$, so we will always need to specify which reaction indices we are using.  If the edge is sub-positive for some $j$ and sub-negative for some other $j$, we refer to it as \emph{reaction-ambiguous}.  An edge could be both concentration-ambiguous and reaction-ambiguous (see in particular Lemma~\ref{AmbigLemma}).   However, because all of the concentrations $x_i$ are positive, the sign of $J_{i_2ji_1}$ does not depend on $\mathbf{x}$.

Given two cycles and choices of reactions $j$ as above for each of their edges, we will say that the cycles \emph{strongly intersect} if they have at least one vertex $S_i$ in common such that the reaction $j$ that is chosen for the outgoing edge from $S_i$ is the same for both cycles.  If this condition is not met, then they are \emph{weakly disjoint}.

\subsection{An example}\label{example}

To illustrate our definitions, consider the following hypothetical network, to which we will return in section~\ref{degenerate}:

\[\begin{array}{l}
1.\quad A + B + C \to X \\
2.\quad A + B + D \to Y \\
3.\quad C + E \to A \\
4.\quad D + E \to B \\
5.\quad A \to Z \\
6.\quad Z \to D.
\end{array}
\]

\noindent Using the law of mass action, we can compute the rate functions for each species:

\begin{eqnarray*}
\dfrac{d[A]}{dt} &=&  -k_1[A][B][C]-k_2[A][B][D]+k_3[C][E]-k_5[A]\\
\dfrac{d[B]}{dt} &=&  -k_1[A][B][C]-k_2[A][B][D]+k_4[D][E]\\
\dfrac{d[C]}{dt} &=&  -k_1[A][B][C]-k_3[C][E]\\
\dfrac{d[D]}{dt} &=&  -k_2[A][B][D]-k_4[D][E]+k_6[Z]\\
\dfrac{d[E]}{dt} &=&  -k_3[C][E]-k_4[D][E]\\
\dfrac{d[X]}{dt} &=&  k_1[A][B][C]\\
\dfrac{d[Y]}{dt} &=&  k_2[A][B][D]\\
\dfrac{d[Z]}{dt} &=&  k_5[A]-k_6[Z].
\end{eqnarray*}

\noindent The interaction graph for this network contains many cycles; one of interest is $C \Rightarrow A \Rightarrow Z \Rightarrow D \Rightarrow B \Rightarrow C$, which we will call cycle $c$.\footnote{An example of a cycle intersecting $c$ is $c': D \Rightarrow B \Rightarrow D$. If we pick the term $J_{224}$ in $c$ and the terms $J_{244}$ and $J_{422}$ in $c'$, then this is not a strong intersection.}   Numbering the species in alphebetical order from 1 to 8, the entries in the Jacobian which comprise this cycle are $J_{13}, J_{81}, J_{48}, J_{24},$ and $J_{32}$, which have the values

\begin{eqnarray*}
J_{13} \quad = \quad \dfrac{\partial}{\partial [C]}\dfrac{d[A]}{dt} &=&  -k_1[A][B]+k_3[E]\\
J_{81} \quad = \quad \dfrac{\partial}{\partial [A]}\dfrac{d[Z]}{dt} &=&  k_5\\
J_{48} \quad = \quad \dfrac{\partial}{\partial [Z]}\dfrac{d[D]}{dt} &=&  k_6\\
J_{24} \quad = \quad \dfrac{\partial}{\partial [D]}\dfrac{d[B]}{dt} &=&  -k_2[A][B]+k_4[E]\\
J_{32} \quad = \quad \dfrac{\partial}{\partial [B]}\dfrac{d[C]}{dt} &=&  -k_1[A][C].
\end{eqnarray*}

Let us suppose for later use that $k_1 = k_2$ and $k_3 = k_4$, remembering also that all of the $k_j$ are positive.  Using the terms $J_{133}, J_{851}, J_{468}, J_{224},$ and $J_{312}$, $c$ is sub-positive; the first three terms ($k_3[E], k_5$, and $k_6$) are positive, while the last two terms ($-k_2[A][B]$ and~$-k_1[A][C]$) are negative.  However, $J_{13}$ and $J_{24}$ are identical, so their signs are always equal, and hence their product is nonnegative and $c$ is always negative.

\subsection{The Thomas-Soul\'{e} theorem}

In 1981, Ren\'{e} Thomas conjectured that any dynamical system displaying stable oscillations must have at least one negative cycle, while any system with multiple steady states must contain a positive cycle \cite{Thomas1}.  Intuitively, these associations of behaviors with signed cycles make sense.  For example, suppose we have a three-species cycle $A~\Rightarrow~B~\Rightarrow~C~\Rightarrow~A$.  If all three edges are negative, then the cycle is negative.  In this case, if the concentration of $A$ is increasing, it will cause the concentration of~$B$ to decrease, which will in turn cause the concentration of $C$ to increase, which will cause the concentration of $A$ to decrease.  Since we initially assumed $A$ to be trending upward, we can see how the cycle promotes oscillatory behavior.  By contrast, if one edge is positive and the other two are negative, then the cycle is positive, and an increase in $A$ is self-reinforcing.  So, if the concentration of $A$ is perturbed away from equilibrium, the positive cycle will aid in pushing the system into a new basin of attraction.

In the years after Thomas's paper, a number of authors found the conjecture to be true in special cases \cite{Cinq1, Gouze, PMO, Snoussi}, and in 2003, Christophe Soul\'{e} presented a full proof~\cite{Soule}.  While the Thomas-Soul\'{e} theorem is certainly elegant, we should note that it gives no information about the sufficiency of positive cycles: in order for a system to display multistability, certain other conditions will need to be met, either in the structure of the network or in the ranges of values of specific state variables or parameters.

The precise statement of Soul\'{e}'s result is as follows.

\begin{thm}[Soul\'{e} \cite{Soule}]\label{SouleThm}
Let $D$ be an open, rectangular region of $\mathbb{R}^n$, and let $F: D \to  \mathbb{R}^n$ be a differentiable function.  If the interaction graph of F has no positive cycles for any $\mathbf{x} \in D$, then F is injective.
\end{thm}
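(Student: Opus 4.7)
The natural strategy is to reduce Theorem~\ref{SouleThm} to the Gale-Nikaid\^{o} result (Theorem~\ref{GNThm1}) by showing that $-\mathbf{J}_F$ is a (weak) P-matrix at every point of $D$. Once this is established, Gale-Nikaid\^{o} applied to $-F$ delivers injectivity of $-F$ and hence of $F$. Since every principal submatrix of $\mathbf{J}_F$ inherits the ``no positive cycles'' hypothesis---any cycle in the induced subgraph is already a cycle in the original interaction graph---it suffices to control the sign of $\det(\mathbf{J}_F)$ by an argument that applies verbatim to each principal submatrix.

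Starting from the Leibniz expansion
\[\det(\mathbf{J}_F) = \sum_\sigma \mathrm{sgn}(\sigma) \prod_i J_{i\sigma(i)},\]
I would decompose each $\sigma$ into disjoint cycles. A fixed point $i$ contributes $J_{ii}$ with no sign, while a non-trivial cycle $(i_1, i_2, \dots, i_\ell)$ contributes $J_{i_1 i_2} J_{i_2 i_3} \cdots J_{i_\ell i_1}$ together with a factor $(-1)^{\ell-1}$ absorbed from $\mathrm{sgn}(\sigma)$. The key observation is that this product of off-diagonal entries is exactly the signed weight of the interaction-graph cycle $S_{i_1} \Rightarrow S_{i_\ell} \Rightarrow \cdots \Rightarrow S_{i_2} \Rightarrow S_{i_1}$, so under the hypothesis the product is non-positive. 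Combining the $(-1)^{\ell-1}$ from the permutation with the extra negative sign on each non-trivial cycle contributes $(-1)^\ell$ per such cycle; adding in the fixed-point factors and remembering (as noted in section~\ref{graphsection}) that the diagonal entries $J_{ii}$ are non-positive, the signs telescope so that every non-vanishing Leibniz term carries the common sign $(-1)^n$. Hence $(-1)^n \det(\mathbf{J}_F) \geq 0$, and the same calculation on every principal submatrix shows that $-\mathbf{J}_F$ is a $P_0$-matrix on $D$.

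To upgrade from $P_0$ to the strict positivity of the top-order minor demanded by Theorem~\ref{GNThm1}, I would introduce the perturbation $F_\varepsilon(\mathbf{x}) = F(\mathbf{x}) - \varepsilon \mathbf{x}$ for $\varepsilon > 0$. This replaces each $J_{ii}$ by $J_{ii} - \varepsilon < 0$ while leaving every off-diagonal entry---and therefore the entire interaction graph---unchanged. The identity permutation now contributes a nonzero term of sign $(-1)^n$, and since all remaining Leibniz terms carry the same sign $(-1)^n$ or vanish, $\det(-\mathbf{J}_{F_\varepsilon}) > 0$; the same holds in every principal submatrix, so $-\mathbf{J}_{F_\varepsilon}$ is a strict P-matrix on $D$ and Theorem~\ref{GNThm1} gives injectivity of $F_\varepsilon$. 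A homotopy argument modeled on the weak-P-matrix half of Gale-Nikaid\^{o}'s own proof---sliding $\varepsilon \to 0^+$ while tracking the topological degrees of boundary cycles, each of which remains equal to $1$ by the uniform sign control just established---would finally transfer injectivity back to $F$. The combinatorial middle step is largely bookkeeping once the correspondence between permutation cycles and interaction-graph cycles is laid down cleanly, so I expect the main obstacle to be this last limiting step: injectivity is notoriously unstable under pointwise limits, and one must genuinely exploit the non-degeneracy of the Jacobians along the whole family $\{F_\varepsilon\}$ rather than treat $F$ as a naive limit of injective maps.
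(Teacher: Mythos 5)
Your combinatorial core---expanding each principal minor over permutations, splitting each permutation into disjoint cycles, matching the nontrivial permutation-cycles with (reversed) interaction-graph cycles, and concluding that every nonvanishing term of an $N \times N$ principal minor carries the sign $(-1)^N$, so that $-\mathbf{J}_F$ is a $P_0$-matrix---is exactly the route the paper takes, and that part is sound. One caveat before the main problem: you invoke $J_{ii} \le 0$ as though it followed from the hypothesis, but for a general differentiable $F$ (which is what Theorem~\ref{SouleThm} is stated for) the ``no positive cycles'' condition says nothing about the diagonal, since cycles here have length at least $2$ and self-edges are excluded from $G(\mathbf{x})$; the nonpositivity of the $J_{ii}$ is a CRN-specific fact, and for general $F$ it has to be added as a hypothesis (in Soul\'{e}'s own formulation loops count as circuits of length $1$, so it is folded into ``no positive circuits''). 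Without it, the identity-permutation term can have the wrong sign and the whole $P_0$ claim fails.

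The genuine gap is the final step: passing from injectivity of $F_\varepsilon = F - \varepsilon \mathbf{x}$ back to injectivity of $F$ cannot be made to work, and no amount of degree bookkeeping will save it. The degree argument in the weak-P-matrix half of Gale--Nikaid\^{o} needs $\det(\mathbf{J}_F) \ne 0$ \emph{at} $\varepsilon = 0$ so that each preimage of a value contributes local degree exactly $+1$; all you have at $\varepsilon = 0$ is $(-1)^n\det(\mathbf{J}_F) \ge 0$, so preimages may be degenerate, contribute local degree $0$, or fail to be isolated, and the identity ``boundary degree equals number of preimages'' collapses. A constant map $F \equiv \mathbf{c}$ shows the failure is real: its interaction graph has no edges at all, every $F_\varepsilon$ is injective, yet $F$ is nowhere injective. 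This is precisely why the paper does not perturb: it either assumes $\det(\mathbf{J}_F) > 0$ outright, so that $-\mathbf{J}_F$ is already a weak P-matrix and Theorem~\ref{GNThm1} applies on the open domain with no limiting step, or assumes homogeneous outflow, which makes the diagonal strictly negative and $-\mathbf{J}_F$ a genuine P-matrix; failing both, it weakens the conclusion to uniqueness of \emph{nondegenerate} zeros. You should restructure the end of your argument along one of those lines rather than trying to send $\varepsilon \to 0^+$.
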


The general idea behind the proof is that each term in a given principal minor with $N$ elements corresponds to a permutation $\sigma$ of $\{1, 2, \dots, N\}$ such that the $(i, \sigma(i))^{th}$ entry in the submatrix is nonzero.  This $\sigma$ in turn corresponds, by taking its algebraic cycle decomposition, to a union of disjoint cycles in our sense of the word (some possibly of length 1) that together incorporate all $N$ elements.  As we will see in section~\ref{CFsubsection1}, if all of the cycles in the interaction graph are negative, then all of the terms in the expansion of any principal minor will have the same sign, namely $(-1)^N$.  Neither the link between cycles and determinants \cite{Eco} nor this result about negative cycles \cite{BMQ} are new in Soul\'{e}'s work, but these facts are typically used in connection with the full Jacobian.  Because nonsingularity does not by itself imply injectivity, however, it is important that Soul\'{e} takes the results a step farther by involving the principal minors of $\mathbf{J}_F$.  

If we apply the above reasoning about cycles and determinants to the matrix $-\mathbf{J}_F$, we see that if $G(\mathbf{x})$ never contains a positive cycle, then none of the minors of this matrix will ever be negative.  If the full Jacobian determinant is strictly positive, then $-\mathbf{J}_F$ is a weak P-matrix, and by Gale-Nikaid\^{o} (Theorem~\ref{GNThm1}), $-F$ (and hence $F$) is injective.  If we assume homogeneous outflow (see section~\ref{CFsection}), then all minors have a nonzero diagonal term, and $-\mathbf{J}_F$ is in fact a P-matrix, meaning that the domain $D$ need not be open.  If neither of these assumptions is met, then the theorem still holds, with the caveat that it only guarantees a single \emph{nondegenerate} zero, i.e. one at which $\det(\mathbf{J}_F) \ne 0$ \cite{Soule, Soule2}.

Note that this theorem, like that of Gale-Nikaid\^{o}, applies to any differentiable function.  As a special case, of course, it can be applied to vector rate functions for reaction networks.  However, some of its power is lost when we confine our attention to this restricted class of polynomials.  The next theorem we will examine, by contrast, is formulated specifically for reaction networks.

\section{The Craciun-Feinberg theorem}\label{CFsection}

Recently, writing from the perspective of chemical engineering, Gheorghe Craciun and Martin Feinberg have presented an exciting, graph-based theorem on multistability in CRNs \cite{CF1}-\cite{CF3}.  They imagine their reactions to be taking place in what is called a continuous-flow stirred tank reactor (CFSTR), an isothermal, spatially uniform container with constant liquid flow streams in and out.  If $r$ is the rate of flow in units of 1/time, then the inward stream, or feed, is a constant vector $r\mathbf{f}$ (written with one component per species), with $\mathbf{f}$ in units of moles per volume, while the outward stream, or outflow, is assumed to be drawn homogeneously from the reactor and hence equals $r\mathbf{x}$.  We can think of the feed and the outflow as reactions of the form $0 \to S$ and $S \to 0$, respectively, to be appended to the CRN, with the outflow, we note, obeying mass action kinetics.  For example, in the network from section~\ref{example}, the rate function for $A$ would become

\begin{eqnarray*}
\dfrac{d[A]}{dt} &=&  rf_1-k_1[A][B][C]-k_2[A][B][D]+k_3[C][E]-k_5[A]-r[A].\\
\end{eqnarray*}

\noindent A network with this type of outflow is typically referred to as \emph{open} or \emph{homogeneous}.

Upon computing any partial derivatives of the rate expressions, the feed term disappears, so the feed does not affect the Jacobian.  From now on, we will ignore the feed stream, noting that because it is constant, the feed-free reaction network is injective if and only if the full CFSTR network is injective \cite{CF1}. On the other hand, the outflow contributes to the diagonal terms of the Jacobian matrix, and this fact is used in the proof.  One might argue that the requirement of outflow reactions makes the theory less widely applicable, but a few points may be offered in response.  First, in a biological context, the outflow may be interpreted as the degradation of proteins or RNAs, which, as with the flow from a CFSTR, is typically assumed to proceed at a rate proportional to the concentrations of the species in question \cite{switch, Immune, Soule2}.\footnote{As Craciun and Feinberg observe, their results are unchanged when the rate constants for the outflow are different for the different species \cite{CF1}.}  Additionally, even if we wish to insist that the concentrations of certain enzymes or other species are not subject to outflow or degradation, Craciun and Feinberg prove that their results carry over, with only degenerate exceptions, to this \textquotedblleft entrapped species" case \cite{CF3}.

In addition to the presence of outflow, the Craciun-Feinberg theory differs from other approaches in the way that it treats rate constants and in how it parses the Jacobian matrix.  So far, all of our results have been formulated to apply to a single, well-defined network, complete with rate constants.  In Craciun-Feinberg, however, when one is able to prove injectivity for a CRN, it applies for any positive values of the $k_j$.  This added power can be useful, especially in biology, because accurate values for rate constants and concentrations can be very difficult to obtain, and realistic reaction networks will usually contain many such parameters \cite{Bailey, Soule}.  On the other hand, though, we might want to be able to distinguish, for a given network, which parameter regimes can support multistability and which cannot.  The structure of the proof in section~\ref{CFsubsection2} is very much related to the parameter-independent nature of the theory.


With regard to the Jacobian, while Craciun and Feinberg use different terminology from ours, their key innovation is effectively to split each entry $J_{ik}$ into its component terms $J_{ijk}$ and to expand $\det(\mathbf{J_F})$ in terms of these components.  This allows for finer criteria in relating the structure of $G(\mathbf{x})$ to the value of $\det(\mathbf{J_F})$; some (sub-)positive cycles can be accounted for without violating the conclusion of injectivity.  Section~\ref{CFsubsection1} is devoted to describing how this process works.

\subsection{New notation and the statment of the theorem}\label{CFnotation}

Craciun and Feinberg's statements and proofs are somewhat different from ours, both in choice of language and in some more significant structural ways \cite{CF1, CF2}.  As a result, we will begin by explaining a few of their concepts, in order to illuminate the relationships between their work and the more standard interaction-graph-based literature (about which we will say more in section~\ref{compare}).

First, Craciun and Feinberg define two new graphs, called the SR and OSR Graphs, which have vertices representing both reactions and species as well as more elaborate edge labels, and they make use of the cycles in these graphs (SR in the statement of the theorem and OSR in the proof) rather than those in the standard interaction graph.  This discrepancy is the reason for our introduction of sub-signs in the interaction graph.  Second, they classify cycles on the basis of parity, rather than sign, as follows.  Let $(i_1, i_2, \dots, i_N)$ be a cycle in $G(\mathbf{x})$, and let $j_1, j_2, \dots, j_N$ be reaction indices such that~$J_{i_{h+1}j_hi_h}$ is nonzero for $1 \le h \le N$ (cyclically).  Each index $j_h$ is called \emph{shared} if $S_{i_h}$ and $S_{i_{h+1}}$ are on the same side of reaction $j_h$.  Then, we say that $(i_1, i_2, \dots, i_N)$ has \emph{even potential} if, for some choice of the indices $j_1, j_2, \dots, j_N$, an even number of them are shared.  For example, in our use of the cycle $c$ in section~\ref{example}, the shared edges are those from $D$ to $B$ and from $B$ to $C$, so~$c$ has even potential. 

Having given a sense of their framework, we will now show how their definitions relate to the notions we have already introduced.  Let $(i_1, i_2, \dots, i_N)$ and $j_1, j_2, \dots, j_N$ be as above, so that, for each $h$, \[J_{i_{h+1}j_hi_h} = \dfrac{c_{j_hi_h}}{x_{i_h}}e_{j_hi_{h+1}}k_{j_h}\mathbf{x^{c_{j_h}}}\] is nonzero.  This implies that $c_{j_hi_h}$ is nonzero, which is to say that $S_{i_h}$ is on the reactant side of reaction $j_h$.  The sign of $J_{i_{h+1}j_hi_h}$ is determined by that of $e_{j_hi_{h+1}}$, so this term is positive when $S_{i_{h+1}}$ is on the product side of reaction $j_h$ and negative when $S_{i_{h+1}}$ is on the reactant side.  (Recall that a species cannot appear on both sides of a reaction.)  Thus, the edge $J_{i_{h+1}i_h}$ is sub-negative if and only if $j_h$ is shared.  By counting the number of such edges, we see that $(i_1, i_2, \dots, i_N)$ has even potential if and only if there exist $j_1, j_2, \dots, j_N$ such that $(i_1, i_2, \dots, i_N)$ is sub-positive.

As one last step, we define the \emph{value} of the cycle $(i_1, i_2, \dots, i_N)$, given a choice of reaction indices $(j_1, \dots, j_N)$, to be the quotient

\[\left| \dfrac{\displaystyle\prod_{h=1}^NJ_{i_{h+1}j_hi_h}}{\displaystyle\prod_{h=1}^NJ_{i_hj_hi_h}} \right| = \left| \dfrac{\displaystyle\prod_{h=1}^Ne_{j_hi_{h+1}}}{\displaystyle\prod_{h=1}^Ne_{j_hi_h}} \right|.\]  

\noindent This expression amounts to an alternating product and quotient of the stoichiometric coefficients of the species in the chosen reactions.  Its significance will become apparent in the next section.


We are now able to state the main theorem.

\begin{thm}[Craciun-Feinberg \cite{CF2}, restated]\label{CFlemma1}
Suppose an open (homogeneous) CRN has the property that if a cycle in $G(\mathbf{x})$ is sub-positive for some chosen reaction indices, then it has value 1.  If no two sub-positive cycles strongly intersect, then the network cannot have multiple equilibria, regardless of the rate constants of the reactions.
\end{thm}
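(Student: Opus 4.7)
The plan is to apply the Gale-Nikaid\^{o} theorem (Theorem~\ref{GNThm1}) to $-\mathbf{F}$ by showing that $-\mathbf{J}_F$ is a P-matrix at every point $\mathbf{x}\in(\mathbb{R}^+)^n$; injectivity of $\mathbf{F}$ then precludes multiple equilibria. Because the network is open and homogeneous, every diagonal entry of $-\mathbf{J}_F$ contains a strictly positive outflow summand, so the proof reduces to showing that, after cancellation, the refined Leibniz expansion of each principal minor of $-\mathbf{J}_F$ is a sum of nonnegative terms including at least one strictly positive identity-plus-outflow contribution.

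First I would expand each principal minor by Leibniz and then further split every off-diagonal entry $-J_{ik}=-\sum_j J_{ijk}$ into its reaction-indexed components, producing a sum indexed by pairs $(\sigma,\mathbf{j})$ where $\sigma$ permutes the minor's index set and $\mathbf{j}=(j_i)$ assigns a reaction at each vertex. A fixed point of $\sigma$ at $i$ contributes the nonnegative factor $-J_{ij_ii}$, while a length-$N$ $\sigma$-cycle on $(i_1,\ldots,i_N)$ with chosen reactions $(j_1,\ldots,j_N)$ contributes $(-1)^{N-1}(-1)^N\prod_h J_{i_{h+1}j_hi_h}=-\prod_h J_{i_{h+1}j_hi_h}$. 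Inspecting signs, this cycle contribution is negative exactly when the corresponding interaction-graph cycle is sub-positive (with the chosen reactions) and positive when it is sub-negative.

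The heart of the argument is to cancel the negative contributions. I would pair each term $(\sigma,\mathbf{j})$ containing a sub-positive $\sigma$-cycle $c$ on $(i_1,\ldots,i_N)$ with the ``flipped'' term $(\sigma',\mathbf{j})$, in which $\sigma'$ agrees with $\sigma$ off $\{i_1,\ldots,i_N\}$ but fixes each $i_h$. A short computation using $J_{i_hj_hi_h}=\frac{c_{j_hi_h}}{x_{i_h}}e_{j_hi_h}k_{j_h}\mathbf{x}^{c_{j_h}}$ together with the value-$1$ hypothesis shows that the two contributions are equal in magnitude and opposite in sign. To make this into a well-defined involution on the expanded sum, I would canonically single out one flippable sub-positive cycle per term, for instance the lexicographically smallest one.

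The role of the no-strong-intersection hypothesis is precisely to make this pairing consistent. For any fixed $\mathbf{j}$, it forces all sub-positive cycles realizable with the reactions in $\mathbf{j}$ to be vertex-disjoint, since two such cycles sharing a vertex $S_i$ would have to use the single assigned reaction $j_i$ at that vertex and hence strongly intersect. Vertex-disjointness ensures that flipping the canonical sub-positive cycle leaves the set of realizable sub-positive cycles in the partner term unchanged, so the same canonical choice inverts the flip and the involution is well-defined. The unpaired contributions then consist only of terms whose $\sigma$-cycles of length at least two are all sub-negative, each a product of nonnegative factors, and the identity-plus-outflow term provides the strict positivity needed for the P-matrix conclusion. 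The main obstacle in executing this plan is the combinatorial verification of this involution when multiple sub-positive cycles coexist---which is exactly what the no-strong-intersection hypothesis is tailored to handle.
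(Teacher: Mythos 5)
Your proposal is correct in its essentials, but it reaches the theorem by a route that differs from the paper's main line in both halves (the paper itself endorses your second half as an alternative in section~\ref{compare}). For the combinatorial half, the paper's Theorem~\ref{CFlemma2} groups the refined expansion by the choice of $n$ reactions and factors each coefficient as $\det([\mathbf{c_{j_1}}, \dots, \mathbf{c_{j_n}}])\cdot \det([\mathbf{e_{j_1}}, \dots, \mathbf{e_{j_n}}])$, then fixes the sign of the latter by summing $C_N\prod_{k}(C_k+C_k')$ over subsets of sub-positive cycles; your sign-reversing involution on pairs $(\sigma,\mathbf{j})$ performs the same cancellation directly on the principal minors of $-\mathbf{J_F}$, and you correctly isolate the one place the no-strong-intersection hypothesis enters: for a fixed per-vertex reaction assignment it forces the realizable sub-positive cycles to be pairwise vertex-disjoint, so toggling one cannot disturb another. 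One caveat: as literally stated your pairing is not self-inverse, since the flipped term no longer contains $c$ as a $\sigma$-cycle and the rule ``flip the lexicographically smallest sub-positive $\sigma$-cycle'' would send it elsewhere; the canonical choice must range over realizable sub-positive cycles that are either present as $\sigma$-cycles or supported entirely on fixed points of $\sigma$, and the toggle must insert as well as delete --- exactly the bookkeeping that the paper's product formula $C_N\prod_k(C_k+C_k')$ handles automatically. For the second half, you invoke Gale-Nikaid\^{o} (Theorem~\ref{GNThm1}) on the P-matrix $-\mathbf{J_F}$, where the paper's main proof instead uses Theorem~\ref{T}, a chain of substitutions showing that nonvanishing of $\det(\mathbf{J}_{\mathbf{F}^A})$ for \emph{all} coefficients $A$ is equivalent to injectivity for all $A$. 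Your route costs you the verification of every principal minor rather than only the full determinant --- harmless here, since the cycle hypotheses restrict to every induced subgraph and the outflow supplies a strictly positive all-diagonal, all-outflow term in each minor --- and it buys a pointwise conclusion, namely injectivity for each individual network whose minors are positive, rather than a statement quantified over the entire family of rate constants.
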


In particular, if no cycles have even potential, then the network is injective.  The same is true if all stoichiometric coefficients are equal to 1 and no species appears in more than two reactions~\cite{CF2}.

The proof can be broken into two components: showing that the cycle condition implies a nonzero Jacobian determinant and showing that the nonzero Jacobian determinant implies injectivity.  

\subsection{The Craciun-Feinberg proof from cycles to the Jacobian}\label{CFsubsection1}

\begin{thm}\label{CFlemma2}
Suppose an open (homogeneous) CRN has the property that if a cycle in~$G(\mathbf{x})$ is sub-positive for some chosen reaction indices, then it has value \emph{less than or equal} to 1.  If no two sub-positive cycles strongly intersect, then the Jacobian $\det(\mathbf{J_F})$ is nonzero for all $\mathbf{x}$ and all positive values of the rate constants.
\end{thm}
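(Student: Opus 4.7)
The plan is to expand $\det(\mathbf{J_F})$ explicitly, exploiting the decomposition $J_{ik}=\sum_j J_{ijk}$ to obtain a fine-grained sum indexed by permutations and reaction choices, and then show that the hypotheses force all contributions to have a consistent sign. Applying multilinearity of the determinant in its rows together with the Leibniz formula gives
\[
\det(\mathbf{J_F}) \;=\; \sum_{j_1,\ldots,j_n}\sum_{\sigma\in S_n}\mathrm{sgn}(\sigma)\prod_{i=1}^n J_{i,j_i,\sigma(i)}.
\]
Each term factors along the cycle decomposition of $\sigma$: one-cycles (fixed points) pick out diagonal entries $J_{i,j_i,i}$, while cycles of length $N\geq 2$ pick out products of off-diagonal entries that correspond to cycles in $G(\mathbf{x})$ equipped with a reaction label on each edge.

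First I would carry out a sign analysis. Since no species appears on both sides of any reaction, each nonzero $J_{i,j_i,i}$ is strictly negative, and the outflow reaction attached to $S_i$ always contributes a strictly negative term to $J_{ii}$. Absorbing the overall $(-1)^n$ into $\det(-\mathbf{J_F})$ and distributing signs across cycle blocks, one checks that every one-cycle contributes a nonnegative factor, a sub-negative cycle of length $\geq 2$ contributes a nonnegative factor, and a sub-positive cycle of length $\geq 2$ contributes a nonpositive factor. Hence the only potentially ``bad'' terms in $\det(-\mathbf{J_F})$ are those whose permutation contains at least one sub-positive cycle.

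Next I would dominate each bad term by a ``good'' term using the value hypothesis. For a sub-positive cycle $(i_1,\ldots,i_N)$ with chosen reactions $(j_1,\ldots,j_N)$, the value-at-most-one assumption is precisely
\[
\left| \prod_{h=1}^N J_{i_{h+1},j_h,i_h} \right| \;\leq\; \prod_{h=1}^N \left| J_{i_h,j_h,i_h} \right|,
\]
with the right-hand side a product of (nonzero) diagonal entries at the \emph{source} species of each cycle edge, using the same reactions. I would therefore pair each bad term with the term obtained by breaking its sub-positive cycle into fixed points and re-assigning the reaction at each $S_{i_h}$ to be the one used for the outgoing edge $S_{i_h}\to S_{i_{h+1}}$; the value bound then says the bad contribution is dominated in absolute value by the paired good contribution.

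The main obstacle, and the exact reason the no-strong-intersection hypothesis is needed, is to show that this pairing injects bad terms into distinct good terms. If two sub-positive cycles in the same permutation strongly intersected at some $S_i$ via the same outgoing-edge reaction $j$, both would need to borrow the single diagonal slot $J_{i,j,i}$, and the bound would collapse. Ruling this out lets the accounting go through: the distinct borrowings consume distinct diagonal slots, and summing yields $\det(-\mathbf{J_F})\geq 0$, with strict positivity following from the all-fixed-points permutation whose product is strictly positive thanks to the outflow contributions. The delicate technical point is aligning the reaction-to-row indexing natural to the Leibniz expansion with the reaction-to-source-species indexing used by the Craciun-Feinberg value formula, and verifying that the reassigned fixed-point term really arises in the expansion with the correct sign and is not already consumed by another pairing.
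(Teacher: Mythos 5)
Your overall strategy---expand $\det(\mathbf{J_F})$ multilinearly over reaction choices and permutations, classify cycle blocks by sub-sign, and use the value bound to offset each sub-positive cycle against the diagonal entries at its source species---is the same one the paper uses; the paper merely packages the expansion as a sum over reaction sets of $\det([\mathbf{c_{j_1}},\dots,\mathbf{c_{j_n}}])\cdot\det([\mathbf{e_{j_1}},\dots,\mathbf{e_{j_n}}])$, with reactions attached to \emph{source} species, which incidentally dissolves your row-versus-source indexing headache (breaking a cycle into fixed points then requires no reassignment of reactions). The genuine gap is in your accounting. You propose to \emph{inject} bad terms into distinct good terms and dominate each bad term by its image. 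This cannot work once a single reaction assignment supports $M\ge 2$ sub-positive cycles. The no-strong-intersection hypothesis forces such cycles to be pairwise vertex-disjoint (two cycles sharing a vertex would have to use the same outgoing reaction there), so all $2^M$ subsets of them occur as genuine terms of the expansion, and all $2^M-1$ terms in which a nonempty subset is present break to the \emph{same} fully-broken good term; no injection into good terms exists. Even if one did, termwise domination $|G|\ge|B|$ does not give $|G|\ge\sum_B|B|$, and since the hypothesis allows value exactly $1$, you cannot afford even two bad terms charged against one good one.

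The repair---and the actual combinatorial content of the paper's proof---is to replace the injection by a signed grouping. Fix the reaction assignment, a set $S_N$ of sub-negative cycles with product $C_N$, and the diagonal entries on the vertices touched by no cycle; let $C_1,\dots,C_M$ be the factors of the compatible sub-positive cycles and $C_1',\dots,C_M'$ the corresponding products of diagonal (source) entries. The sum of all terms in this group is $C_N\cdot(\text{leftover diagonal part})\cdot\prod_{k=1}^{M}(C_k+C_k')$, and each factor $C_k+C_k'$ has the sign of $C_k'$ or vanishes, precisely because $|C_k|\le|C_k'|$ and the two carry opposite signs. This product factorization is what the no-strong-intersection hypothesis actually buys you---not injectivity of a pairing. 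With that substitution, your sign analysis, your reading of the value bound, and your appeal to the pure-outflow diagonal term for strict nonvanishing all go through.
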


Note that we have generalized Craciun and Feinberg's result to include a wider range of sub-positive cycles.  Na\"{i}vely, we would expect that the new version now covers, in addition to the cycles with value 1, roughly half of all other cycles.  

\begin{proof}

In the natural expansion of $\det(\mathbf{J_F})$, each term is a product of $n$ of the entries~$J_{ik}$, with one term for each permutaion $i_1, i_2, \dots, i_n$ of $1, 2, \dots, n$.  Craciun and Feinberg, by essentially splitting the $J_{ik}$ into sums of $J_{ijk}$, cleverly rearrange the expansion into terms corresponding to a permutation $i_1, i_2, \dots, i_n$ \emph{and} some list $j_1, j_2, \dots, j_n$.\footnote{Because of the outflow, there are always at least $n$ reactions in the network.}  Each of these new terms turns out to be a product of some positive rate constants and concentrations, multiplied by $\det([\mathbf{c_{j_1}}, \dots, \mathbf{c_{j_n}}])\cdot \det([\mathbf{e_{j_1}}, \dots, \mathbf{e_{j_n}}])$, with these matrices of $\mathbf{e}$s formed by stacking the $n$ row vectors listed.\footnote{In Craciun-Feinberg, these matrices are given as the transposes of the versions here, but we can make the change because taking transposes preserves determinants and flips the directions of all edges, so that cycles and their intersections are preserved, with opposite orientation.}   We would like to show that all such coefficients - the products of the determinants - are zero or have sign $(-1)^n$.  



Let us consider a nonzero product $\det([\mathbf{c_{j_1}}, \dots, \mathbf{c_{j_n}}])\cdot \det([\mathbf{e_{j_1}}, \dots, \mathbf{e_{j_n}}])$.  In the first determinant, there must be a nonzero entry in each row and in each column, so we can switch the order of the $i_h$ such that the diagonal entries (of both matrices, in fact) are all nonzero.  Different orderings satisfying this condition correspond to nonzero products involving different permutations of the same $n$ reactions $j_1, j_2, \dots, j_n$.

Now consider $\det([\mathbf{e_{j_1}}, \dots, \mathbf{e_{j_n}}])$.  From our discussion of Theorem~\ref{SouleThm}, we know that the nonzero terms in its (natural) expansion correspond bijectively to all sets of disjoint cycles in the interaction graph for $[\mathbf{e_{j_1}}, \dots, \mathbf{e_{j_n}}]$.  For this matrix, a cycle consists of indices $i_1, i_2, \dots, i_N$ such that each $e_{j_hi_{h+1}}$ is nonzero, which is to say that $S_{i_{h+1}}$ appears in reaction~$j_h$.   Since the diagonal entries are nonzero, we know, as in section~\ref{CFnotation}, that~$S_{i_h}$ is on the reactant side of reaction $j_h$ for all $h$, and so $c_{j_hi_h}$ and $J_{i_{h+1}j_hi_h}$ are nonzero.  Thus, the cycles here are exactly the same as those in $G(\mathbf{x})$ such that, not only is $J_{i_{h+1}i_h}$ nonzero, but $J_{i_{h+1}j_hi_h}$ is nonzero as well.  Moreover, the signs of the cycles here are the same as the original sub-signs, since $e_{j_hi_{h+1}}$ has the same sign as $J_{i_{h+1}j_hi_h}$, as observed in section~\ref{CFnotation}.  Note that if any of these cycles, when regarded in $G(\mathbf{x})$, contain $S_{i_h}$, then the outgoing edge from $S_{i_h}$ must correspond to reaction $j_h$, meaning that any intersecting cycles from~$[\mathbf{e_{j_1}}, \dots, \mathbf{e_{j_n}}]$ are strongly intersecting in $G(\mathbf{x})$.

Suppose we examine a term $T$ in $\det([\mathbf{e_{j_1}},~\dots,~\mathbf{e_{j_n}}])$ having $c~=~(i_1, i_2, \dots, i_N)$ among its set of corresponding disjoint cycles.  This cycle $c$ contributes a factor of $C~=~(-1)^{N-1}\prod_{h=1}^Ne_{j_hi_{h+1}}$ to the term (where $C = T$ only if $N = n$), where the leading~$(-1)^{N-1}$ is due to the sign of the permutation associated with $c$.  By the correspondence of cycles with those of $G(\mathbf{x})$, we know that if $c$ is sub-negative, then the sign of~$C$ is $(-1)^{N-1}(-1) = (-1)^N$, and if $c$ is sub-positive, then the sign of~$C$ is $(-1)^{N-1}$.  There also exists another nonzero term $T'$ in the expansion which differs from $T$ by replacing $C$ with $C' = \prod_{h=1}^Ne_{j_hi_h}$, i.e. by taking the identity permutation of the set $\{i_1, i_2, \dots, i_N\}$.  Since $e_{j_hi_h} = -c_{j_hi_h}$ for all $h$, the sign of $C'$ must be $(-1)^N$.

If all of the cycles contributing to $T$ are sub-negative, then the sign of $T$ is $(-1)^n$.  More generally, suppose that for some choice of reactions $j_1, j_2, \dots, j_n$ as above, we have a set $S_P$ of sub-positive cycles, all of which are weakly disjoint from each other and have $|C'|>|C|$.  Let $S_N$ be any set of disjoint sub-negative cycles, with total associated product~$C_N$, and suppose there are $M$ cycles in $S_P$ that are disjoint from all elements of $S_N$.  If these cycles have associated factors $C_1, C_2, \dots, C_M$ and $C_1', C_2', \dots, C_M'$ as above, then the sum of all of the terms in the expansion of the determinant that contain exactly these sub-negative cycles and some subset of the $M$ sub-positive cycles is $C_N\prod_{k=1}^M(C_k+C_k')$, since each sub-positive cycle either is or is not present.  By the condition $|C'|>|C|$, this quantity has the same sign as $C_N\prod_{k=1}^MC_k'$, which is $(-1)^n$.  Taking all sets $S_N$, we thus account for all of the terms in the expansion of $\det([\mathbf{e_{j_1}},~\dots,~\mathbf{e_{j_n}}])$, and hence the sign of the determinant is $(-1)^n$.\footnote{Craciun and Feinberg \cite{CF2} state and prove this result with the requirement that $|C'|=|C|$, using strict cancellation of terms.}

In order to complete the proof of Theorem~\ref{CFlemma2}, we need to show that $\det([\mathbf{c_{j_1}}, \dots, \mathbf{c_{j_n}}])$ is positive.  Luckily, the same argument we have just used applies again here, but with two modifications.  First, the full set of possible cycles is reduced from all of those in $G(\mathbf{x})$ to those having all edges sub-negative.  Second, the entries in $[\mathbf{c_{j_1}}, \dots, \mathbf{c_{j_n}}]$ are all positive, meaning that the diagonal term in the determinant, and hence the entire determinant, has sign $+1$ instead of $(-1)^n$. 

\end{proof}

\subsection{The Craciun-Feinberg proof from the Jacobian to injectivity}\label{CFsubsection2}

Once we know that all of the terms in the determinant of $\mathbf{J_F}$ have the same sign, all we need in order for the determinant to be nonvanishing is that some term is nonzero.  For this, we simply take the product of the diagonal elements, all of which are strictly negative.  We would then like to prove that the nonzero Jacobian determinant implies injectivity.  The method here is taken directly from Craciun and Feinberg \cite{CF1}, but  whereas they only deal with CRNs, we will expand the proof to a more general class of functions.

Let $m_i$ and $v_{ij}$ be nonnegative integers for $i = 1, 2, \dots, n$ and $j = 1, 2, \dots, m_i$.  For real numbers $A = (\alpha_{ij})$, let $\mathbf{F}^A(\mathbf{x}): (\mathbb{R^+})^n \to \mathbb{R}^n$ be a polynomial function with components $\mathbf{F}^A_i(x_1, x_2, \dots, x_n) = \sum_{j=1}^{m_i}\alpha_{ij}\mathbf{x^{v_{ij}}}$, where we use the shorthand $\mathbf{u^v}=\prod_{\gamma}u_{\gamma}^{v_{\gamma}}$.

From $\mathbf{F}^A$, remembering that the $x_i$ are positive, we define the associated Jacobian matrix $\mathbf{J}_{\mathbf{F}^A}=(J_{ik})$, where \[J_{ik}=\dfrac{\partial\mathbf{F}^A_i}{\partial x_k} = \displaystyle\sum_{j=1}^{m_i}\dfrac{v_{ijk}}{x_k}\alpha_{ij}\mathbf{x^{v_{ij}}}.\] 

\begin{thm}\label{T}
Fix $m_i$ and $\mathbf{v_{ij}}$ as above. The following are equivalent. 1: The function $\mathbf{F}^A$ is injective for all $A$. 2: The determinant of $\mathbf{J}_{\mathbf{F}^A}$ is nonzero for all $A$ and all $\mathbf{x} \in (\mathbb{R^+})^n$.
\end{thm}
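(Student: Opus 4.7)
Both conditions admit a uniform reformulation via the substitution $y_k=\ln x_k$. The scaled Jacobian entries become $\widetilde J_{ik}=x_k J_{ik}=\sum_j v_{ijk}\gamma_{ij}$ with $\gamma_{ij}=\alpha_{ij}\mathbf{x}^{\mathbf{v_{ij}}}>0$, and the condition $\mathbf F^A(\mathbf x^1)=\mathbf F^A(\mathbf x^2)$ for distinct $\mathbf x^1,\mathbf x^2$ translates, after writing $\mathbf u=\ln\mathbf x^2-\ln\mathbf x^1$, into $\sum_j\gamma'_{ij}\sinh(\mathbf{v_{ij}}\cdot\mathbf u/2)=0$ for each $i$, with $\gamma'$ again positive. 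In both formulations the coefficients range independently over the positive cone, so the theorem reduces to relating singularity of the linear map $\gamma\mapsto\bigl(\sum_j v_{ijk}\gamma_{ij}\bigr)_{i,k}$ to the existence of a nontrivial solution of the $\sinh$-system.

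For $(2)\Rightarrow(1)$, I will argue the contrapositive using Rolle's theorem along a log-linear path. Assuming $\mathbf F^A(\mathbf x^1)=\mathbf F^A(\mathbf x^2)$ with $\mathbf w=\ln\mathbf x^2-\ln\mathbf x^1\ne\mathbf 0$, set $\mathbf x(t)=\mathbf x^1\odot e^{t\mathbf w}$; each $H_i(t)=\mathbf F^A_i(\mathbf x(t))$ satisfies $H_i(0)=H_i(1)$, so there is $t_i\in(0,1)$ with $H'_i(t_i)=\sum_k x_k(t_i)J_{ik}(\mathbf x(t_i))w_k=0$. Define $\alpha'_{ij}=\alpha_{ij}\bigl(\mathbf x(t_i)\bigr)^{\mathbf{v_{ij}}}>0$; because each row of a Jacobian in this exponent-fixed family depends only on the corresponding row of the coefficient matrix, the row-dependent choice of evaluation point $\mathbf x(t_i)$ is legitimate, and a short computation gives $\mathbf J_{\mathbf F^{A'}}(\mathbf 1)\mathbf w=\mathbf 0$. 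Hence $\det\mathbf J_{\mathbf F^{A'}}(\mathbf 1)=0$.

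For $(1)\Rightarrow(2)$, again by contrapositive, let $\mathbf w\ne\mathbf 0$ lie in the kernel of the scaled Jacobian at some $(A,\mathbf x^*)$, so $\sum_j\gamma^*_{ij}\beta_{ij}=0$ for each $i$, where $\beta_{ij}=\mathbf{v_{ij}}\cdot\mathbf w$ and $\gamma^*_{ij}=\alpha_{ij}(\mathbf x^*)^{\mathbf{v_{ij}}}>0$. For small $\epsilon>0$ I will seek positive $\gamma'$ near $\gamma^*$ satisfying $\Psi_i(\gamma',\epsilon):=\sum_j\gamma'_{ij}\sinh(\epsilon\beta_{ij}/2)=0$ for every $i$; any such solution produces a non-injective $\mathbf F^{A'}$ at $\mathbf x^\pm=e^{\pm\epsilon\mathbf w/2}$ (taking reference point $\mathbf y^0=\mathbf 0$, so $\alpha'_{ij}=\gamma'_{ij}$). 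Since $\sinh$ is odd, $\Psi_i(\gamma^*,\epsilon)=O(\epsilon^3)$ by the kernel relation. In each row $i$, either some $\beta_{ij}\ne 0$, in which case $\partial\Psi_i/\partial\gamma'_{ij}$ is nonzero and an $O(\epsilon^2)$ adjustment of that single coordinate drives $\Psi_i$ to zero exactly by the scalar intermediate value theorem, or $\beta_{ij}=0$ for all $j$ in that row and $\Psi_i\equiv 0$ requires no adjustment.

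I expect the main obstacle to be the final perturbation in the third paragraph: one must verify that the $O(\epsilon^2)$ adjustments keep all $\gamma'_{ij}$ strictly positive and can be performed simultaneously across rows. The first point follows because $\gamma^*_{ij}$ is strictly positive and the required correction is arbitrarily small; the second, because the adjustments in distinct rows are decoupled (each $\Psi_i$ depends only on the $i$-th row of $\gamma'$). A secondary check is that $\mathbf u=\epsilon\mathbf w$ is nonzero so that $\mathbf x^+\ne\mathbf x^-$, which is immediate from $\mathbf w\ne\mathbf 0$ and $\epsilon>0$.
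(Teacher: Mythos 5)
Your proof is correct, but it takes a genuinely different route from the paper's. The paper proves both directions at once via a single chain of exact equivalences: $\mathbf{J}_{\mathbf{F}^A}\mathbf{y}=\mathbf{0}$ is rewritten as $\sum_j \eta_{ij}\bigl(\sum_k v_{ijk}\delta_k\bigr)=0$, and the linear relation is converted \emph{exactly} into the exponential relation $\sum_j K_{ij}\bigl(\exp(\sum_k v_{ijk}\delta_k)-1\bigr)=0$ by the reweighting $K_{ij}=\eta_{ij}\, r_{ij}/(e^{r_{ij}}-1)$, which preserves signs because $r/(e^r-1)>0$; substituting $e^{\delta_k}=a_k/b_k$ then lands directly on $\mathbf{F}^{A'}(\mathbf{a})=\mathbf{F}^{A'}(\mathbf{b})$. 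You instead split the two implications: your $(2)\Rightarrow(1)$ via Rolle's theorem along the log-linear path, with the observation that the row-dependent evaluation points $\mathbf{x}(t_i)$ can be absorbed into a single coefficient matrix because row $i$ of the Jacobian depends only on row $i$ of $A$, is a clean and arguably more transparent argument than the paper's reverse-substitution bookkeeping. Your $(1)\Rightarrow(2)$ via an $O(\epsilon^2)$ perturbation and the intermediate value theorem does work (the rows decouple and the corrections vanish as $\epsilon\to 0$), but it is more labor than necessary: the paper's trick applies verbatim to your $\sinh$ formulation --- set $\gamma'_{ij}=\gamma^*_{ij}\cdot(\epsilon\beta_{ij}/2)/\sinh(\epsilon\beta_{ij}/2)$, which is an exact, sign-preserving solution with no limiting argument. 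One cosmetic correction: the theorem allows arbitrary real $\alpha_{ij}$, so your repeated assertion that $\gamma_{ij}=\alpha_{ij}\mathbf{x}^{\mathbf{v_{ij}}}>0$ and that the coefficients range over the positive cone is not literally right; what matters (and what is true) is that for fixed positive $\mathbf{x}$ the map $\alpha_{ij}\mapsto\alpha_{ij}\mathbf{x}^{\mathbf{v_{ij}}}$ is a sign-preserving bijection, so the two formulations quantify over the same coefficient space. Your argument survives this correction unchanged, and the sign preservation is exactly what is needed for the paper's closing remark about restricting to positive coefficients in the CRN case.
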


\begin{proof}

Suppose that for some values of $\mathbf{x}$ and the $\alpha_{ij}$, the determinant of $\mathbf{J}_{\mathbf{F}^A}$ is zero.  Since $\mathbf{J}_{\mathbf{F}^A}$ is a linear transformation, this means that there is some nonzero vector $\mathbf{y} \in \mathbb{R}^n$ such that $\mathbf{J}_{\mathbf{F}^A}\mathbf{y} = \mathbf{0}$.  From above, the vector $\mathbf{J}_{\mathbf{F}^A}\mathbf{y}$ has components \[(\mathbf{J}_{\mathbf{F}^A}\mathbf{y})_i = \displaystyle\sum_{k=1}^{n}\left(\displaystyle\sum_{j=1}^{m_i}\dfrac{v_{ijk}}{x_k}\alpha_{ij}\mathbf{x^{v_{ij}}}\right)y_k,\] which we assume are all zero.  We claim that these equalities are equivalent to \[\displaystyle\sum_{j=1}^{m_i}\displaystyle\sum_{k=1}^{n}\eta_{ij}v_{ijk}\delta_k = 0\] for all $i$, where $\eta_{ij} = \alpha_{ij}\mathbf{x^{v_{ij}}}$ and $\delta_k = y_k/x_k$.  All of these are real numbers, and not all of the $\delta_k$ are zero.  In the forward direction, the substitutions are straightforward.  In the reverse direction, we can obtain the implication by, for example, taking all of the $x_k$ to be 1 and then letting $\alpha_{ij} = \eta_{ij}$ and $y_k = \delta_k$.

Next, we claim that this last set of equalities is equivalent to the set \[\displaystyle\sum_{j=1}^{m_i}K_{ij}\cdot \left(\exp\left(\displaystyle\sum_{k=1}^{n}v_{ijk}\delta_k\right)-1\right) = 0\] for all $i$, where \[K_{ij} = \dfrac{\eta_{ij}\cdot\displaystyle\sum_{k=1}^{n}v_{ijk}\delta_k}{\exp\left(\displaystyle\sum_{k=1}^{n}v_{ijk}\delta_k\right)-1}.\] Again, the forward substitution is no problem, with the caveat that the denominators may be zero, in which case we just choose arbitrary $K_{ij}$ for those indices, and the equalities hold regardless of what we choose.  The reverse substitution, for the $\eta_{ij}$ in terms of the~$K_{ij}$, is analagous.

From this set of equalities, we can pass to \[\displaystyle\sum_{j=1}^{m_i}K_{ij}\cdot \left(\dfrac{\mathbf{a^{v_{ij}}}}{\mathbf{b^{v_{ij}}}}-1\right) = 0\] for all $i$, taking all the $a_k$ and $b_k$ positive and $a_k/b_k = e^{\delta_k}$ for each $k$.  In reverse, we simply let $\delta_k = \ln(a_k/b_k)$ for each $k$, noting that the condition that not all of the $\delta_k$ are zero is equivalent to the condition that $\mathbf{a} \neq \mathbf{b}$.  Finally, the above are equivalent to the relations \[\displaystyle\sum_{j=1}^{m_i}L_{ij}\cdot \left(\mathbf{a^{v_{ij}}}-\mathbf{b^{v_{ij}}}\right) = 0\] for all $i$, where $L_{ij}=K_{ij}/\mathbf{b^{v_{ij}}}$, or equivalently, $K_{ij}=L_{ij}\mathbf{b^{v_{ij}}}$.  And, at last, we see that our train of equivalences, starting at a determinant of zero for some point in the domain and some coefficients, ends with a lack of injectivity for some (different) values of the coefficients.  This proves the theorem.

\end{proof}

For the case of a CRN, with reactions of the form $c_{j1}S_1 + c_{j2}S_2 + \dots + c_{jn}S_n \to d_{j1}S_1 + d_{j2}S_2 + \dots + d_{jn}S_n$, we would have $\mathbf{F}_i = dx_i/dt = \sum_{j=1}^{m}e_{ji}k_j\mathbf{x^{c_j}}$, so that $m_i = m$ for all $i$, $\alpha_{ij} = e_{ji}k_j$, and $\mathbf{v_{ij}} = \mathbf{c_j}$ for each $i$.  Note that in the proof above, we can restrict our coefficients to be positive by using the fact that $r(e^r-1) \ge 0$ for all real numbers $r$, with equality only when $r = 0$.

\subsection{Applying the theorem}\label{inhibition}

As discussed at the beginning of section~\ref{CFsection}, the assumption of outflow for CRNs is critical to the Craciun-Feinberg theory (and well justified in many cases).  Experimental and theoretical evidence suggests, though, that closed systems, having at most partial outflow, are easier to constrain in their behavior than are open ones \cite{ClarkeShort, Leib, SCL}.  Feinberg's deficiency theory, for example, gives extensive information for networks with deficiency~0, partial information for deficiency 1, and no information for higher values, and closed systems tend to have low deficiency \cite{Bailey, Feinberg1, Feinberg2}.  Additionally, examples of multistability in the laboratory have tended to come from closed systems \cite{CF1}.

To illustrate the usefulness of the Craciun-Feinberg theorem, however, let us consider the first experimental example of bistability in an open homogeneous system (at least according to its publishing author), which was found by Degn~\cite{Degn} in the peroxidase-catalyzed oxidation of NADH.  He assumes a very simple mechanism, consisting only of Michaelis-Menten enzyme action with one additional substrate-inhibition step:\footnote{The irreversibility of the first step is irrelevant.}

\[\begin{array}{ll}
1. & E + S \to ES \\
2. & ES \to E + P \\
3, 4. & ES + S \leftrightharpoons ES_2.
\end{array}\]

\noindent The form $ES_2$ is a deactivated enzyme-substrate complex.  While we cannot prove that the network supports bistability, we can note that, despite its simplicity, the interaction graph contains two strongly intersecting sub-positive cycles: $E \Rightarrow ES \Rightarrow E$ (using reactions 1 and 2, both sub-positive edges) and $E \Rightarrow S \Rightarrow ES \Rightarrow E$ (using reactions 1, 3, 2, yielding sub-negative, sub-negative, and sub-positive edges). 

By contrast, consider two possible schemes involving product inhibition:

\[\begin{array}{cll}
&1, 2. & E + S \leftrightharpoons ES \quad \\
&3. & ES \to E + P \\
&4, 5. & E + P \leftrightharpoons EP \\
\text{or} \; &4', 5'. & ES + P \leftrightharpoons ESP.
\end{array}\]

\noindent Neither of these mechanisms contains any stoichiometric coefficients greater than 1, and for neither one does the interaction graph have two strongly intersecting nontrivial\footnote{See section~\ref{tricky}.} positive cycles.  Thus, by Theorem~\ref{CFlemma1}, the networks are injective.  As a result, we can confirm for Degn that the ability of the peroxidase system to generate bistability cannot be due to product inhibition, even though all three mechanisms discussed here have very similar forms. 
\medskip
\medskip

\section{Comparing and reconciling the approaches}\label{compare}

The theorems discussed above come from a wide range of fields.  Jelonek and Pinchuk approach the real Jacobian problem as pure mathematicians, Clarke writes as a chemical physicist, Gale and Nikaid\^{o} are mathematical economists, Thomas and Soul\'{e} come with a biological perspective, and Craciun and Feinberg are chemical engineers.  Given the similarities in the methods of the Gale-Nikaid\^{o}/Thomas-Soul\'{e} and the Craciun-Feinberg theorems in particular, though, it is not surprising that the results are closely linked.  

\subsection{Introduction to the case of no concentration-ambiguity}\label{ambig_intro}

The relationship between the Thomas-Soul\'{e} and Craciun-Feinberg theorems is particularly strong when all of the elements in the Jacobian have a fixed sign for all values of the state variables (or, in our terminology, no edges in the interaction graph are concentration-ambiguous). 
In general, the assumption of no concentration-ambiguity is a fairly restrictive one, since the entries in the Jacobian can be complicated functions.  As we will see, though, it has been made in numerous places, albeit without extensive justification.  Our analysis begins with the following lemma.

\begin{lemma}\label{AmbigLemma}
In the interaction graph of a CRN, an edge is concentration-ambiguous if and only if it is reaction-ambiguous.
\end{lemma}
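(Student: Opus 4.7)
The plan is to establish the two directions separately. One is nearly algebraic triviality; the other rests on a single structural observation about where the variable $x_{i_2}$ can appear in the summands of $J_{i_2 i_1}$.

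For the direction \emph{not reaction-ambiguous $\Rightarrow$ not concentration-ambiguous} (the contrapositive of the forward direction), the fact recorded in Section~\ref{CFnotation} that each term $J_{i_2 j i_1}=\frac{c_{j i_1}}{x_{i_1}}\,e_{j i_2}\,k_j\,\mathbf{x^{c_j}}$ has sign equal to $\mathrm{sign}(e_{j i_2})$, independent of $\mathbf{x}$, does all the work: if every nonzero summand in $J_{i_2 i_1}=\sum_j J_{i_2 j i_1}$ carries the same sign, so does the sum, for every $\mathbf{x}\in(\mathbb{R}^+)^n$, and hence the edge cannot be concentration-ambiguous.

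For the more substantive direction, suppose the edge from $S_{i_1}$ to $S_{i_2}$ is reaction-ambiguous, so there exist reactions $j_+$ and $j_-$ with $J_{i_2 j_+ i_1}>0$ and $J_{i_2 j_- i_1}<0$. The observation I would invoke is that $J_{i_2 j i_1}$ is positive exactly when $S_{i_2}$ sits on the product side of reaction $j$ (forcing $c_{j i_2}=0$) and negative exactly when $S_{i_2}$ sits on the reactant side (forcing $c_{j i_2}>0$), the two alternatives being mutually exclusive by the convention in Section~\ref{Notation}. Splitting $J_{i_2 i_1}(\mathbf{x})=P(\mathbf{x})+N(\mathbf{x})$ according to this sign dichotomy, every monomial of $P$ is independent of $x_{i_2}$ while every monomial of $N$ carries a strictly positive power of $x_{i_2}$ (here using $i_1\ne i_2$). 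I would then freeze every coordinate except $x_{i_2}$ at arbitrary positive values and view $J_{i_2 i_1}$ as a single-variable function of $x_{i_2}$: its value as $x_{i_2}\to 0^+$ tends to the strictly positive quantity $P(\mathbf{x})$, while its value as $x_{i_2}\to\infty$ is driven to $-\infty$ by the top-degree monomial of $N$. The sign therefore changes, and the edge is concentration-ambiguous.

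The main subtlety to watch is that $P$ and $N$ may each collect contributions from several reactions: I would need to verify that the positivity of $P$ persists at every positive $\mathbf{x}$ (which it does, since $P$ is a sum of monomials with strictly positive coefficients, at least one coming from $j_+$) and that grouping by $x_{i_2}$-degree in $N$ yields non-positive coefficients with at least one strictly negative, so that $N$ really does diverge to $-\infty$. Both checks reduce to the fact that all rate constants and concentrations are strictly positive, after which the scaling argument in the single variable $x_{i_2}$ is immediate and requires no Newton polytope or separating-hyperplane machinery.
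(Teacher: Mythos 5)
Your proof is correct and takes essentially the same route as the paper's: the forward direction follows from the $\mathbf{x}$-independent sign of each term $J_{i_2 j i_1}$, and the converse is obtained by varying $x_{i_2}$ alone, using that the positive terms are independent of $x_{i_2}$ while the negative terms vanish as $x_{i_2}\to 0^+$ and diverge to $-\infty$ as $x_{i_2}\to\infty$. The extra checks you flag (strict positivity of $P$ and the divergence of $N$) are exactly the points the paper relies on implicitly, so nothing further is needed.
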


\begin{proof}

First, suppose we have a CRN such that the edge from $S_{i_1}$ to $S_{i_2}$ is not reaction-ambiguous, which is to say that all of the terms $J_{i_2ji_1}$ have the same sign.  Since all species concentrations are positive, we see immediately that $J_{i_2i_1}$ has the same sign for all $\mathbf{x}$, and hence the edge is not concentration-ambiguous.

Next, suppose the edge from $S_{i_1}$ to $S_{i_2}$ is reaction-ambiguous; we would like to show that as $\mathbf{x}$ varies, $J_{i_2i_1}$ is sometimes positive and sometimes negative.  Given a nonzero term \[J_{i_2ji_1} = \dfrac{c_{ji_1}}{x_{i_1}}e_{ji_2}k_j\mathbf{x^{c_j}},\] \noindent its sign is determined by that of $e_{ji_2}$.  Looking at reaction $j$ (as in section~\ref{CFnotation}), we see that~$J_{i_2ji_1}$ is positive when $S_{i_2}$ is on the product side and negative when $S_{i_2}$ is on the reactant side.

Note that the rate term above contains a factor of $x_{i_2}$ if and only if $c_{ji_2} > 0$ in the exponent, i.e. if $S_{i_{2}}$ is on the reactant side of reaction $j$.  As a result, in the limit of very small values of $x_{i_2}$, the sum of the negative terms in $J_{i_2i_1}$ will approach 0, while in the limit of very large values, this sum will approach $-\infty$.  Meanwhile, the positive terms in $J_{i_2i_1}$ are unaffected as $x_{i_2}$ varies.  By continuity, then, the edge from $S_{i_1}$ to $S_{i_2}$ is concentration-ambiguous.

\end{proof}

This lemma gives us a clear characterization of concentration-ambiguity for CRNs, namely that it obtains precisely when a species $S_{i_1}$ serves as a direct activator for another species $S_{i_2}$ in one reaction and as an inhibitor of it in another reaction.  Based on this condition, we might indeed expect to find chemical systems without concentration-ambiguity reasonably often, although there are certainly well-known cases that are ambiguous, especially in gene regulation \cite{Cinq1}.  For example, to use we have seen, the first product-inhibition system discussed in section~\ref{inhibition} has no concentration-ambiguity, while the second does.  The first system also provides another illustration of the importance of assuming that enzyme-driven processes are broken down into elementary reaction mechanisms (as in section~\ref{Cat}), since the edge from $E$ to $P$ would be concentration-ambiguous if the first two reactions were replaced by the single step $E+S \to E+P$. 

\subsection{Thomas-Soul\'{e} and Craciun-Feinberg}\label{degenerate}

Before Soul\'{e}'s completion of Thomas's conjecture, Gouz\'{e} \cite{Gouze}, Plahte-Mestl-Ohmolt \cite{PMO}, and Snoussi \cite{Snoussi} all published calculus-based proofs of the conjecture with the additional assumption of no concentration-ambiguity.\footnote{The result of Cinquin-Demongeot \cite{Cinq1} is somewhat more general.}  The Craciun-Feinberg theorem also gives us a simple proof of the conjecture in this case, as follows.  Suppose that for a given CRN, $G(\mathbf{x})$ has no positive cycles and no concentration-ambiguous edges for any $\mathbf{x}$.  The sub-sign of any cycle will always be the same as its sign, i.e. it will be negative.  Thus, Theorem~\ref{CFlemma1} immediately tells us that, as long as our system is open, it must be injective.  In fact, we obtain the stronger result that injectivity holds regardless of the rate constants.  Our conclusion is most similar to that of Snoussi, since he also assumes (in effect) an outflow or degradation term for each state variable \cite{Snoussi}.


We can also make some (but not complete) progress with this argument when edges are allowed to be concentration-ambiguous.  Again, we would like to show that negative cycles must be sub-negative, so that we can apply Theorem~\ref{CFlemma1}.  Suppose $C = (i_1, i_2, \dots, i_N)$ is a negative cycle that is sub-positive for some choice of reactions $j_1, j_2, \dots, j_N$.  There must be at least one edge in $C$ that is reaction-ambiguous, or else $C$ would be positive as well as sub-positive.  Without loss of generality, let us say that $J_{i_2i_1}$ has some terms with positive signs and some with negative signs.


Let us vary $x_{i_2}$, as in the proof of Lemma~\ref{AmbigLemma}, until there is a change in the sign of $J_{i_2i_1}$.  Unless the sign of another edge in $C$ changes simultaneously, we will have a contradiction, because $C$ will be positive for slightly larger or smaller $x_{i_2}$.  In other words, at any concentration vector such that $J_{i_2i_1} = 0$, we must have $J_{i_{h+1}i_h} = 0$ for some other edge in the cycle.  This gives us a finite set of polynomials such that all of the zeroes of one of them ($J_{i_2i_1} = 0$) that are in $(\mathbf{R}^+)^n$ are zeroes of at least one of the others.  This situation is mathematically possible, but it seems physically unrealistic, a  \textquotedblleft measure-zero" case.  


It is true, however, that this case could arise.  For an example, we turn back to the CRN in section~\ref{example}, assuming outflow reactions for each species in addition to the six reactions shown.  As noted above, the cycle $C \Rightarrow A \Rightarrow Z \Rightarrow D \Rightarrow B \Rightarrow C$ is always negative, yet it is sub-positive for the correct choice of reaction indices.  This example is also a good illustration, however, of our statement that this is a \textquotedblleft degenerate" or measure-zero condition:  mathematically, there may be nothing exceptional about the system, but in real life, we could not expect any pairs of rate constants to be precisely equal.


\subsection{Monotone systems and P\'{o}lya's permanent problem}\label{monotone}

It is worth taking a moment to acknowledge two other sources of related results.  The first is the theory of monotone systems, as discussed in the work of Angeli and Sontag \cite{AS1, ASmulti, AFS}.  A CRN gives rise to a monotone system provided that none of the edges in its interaction graph are concentration-ambiguous and all of the cycles are \emph{positive}.  One would expect these conditions to rule out a rather large class of networks, but when they apply, the theory gives more extensive information about sufficient conditions for multistability and the locations of steady states than do any of our other approaches.  The example of MAPK cascades figures prominently in their papers; see section~\ref{MAPK} for a discussion.  A special subcategory of monotone systems is that of cooperative systems, which are those with interaction graphs having all edges positive for all $\mathbf{x}$ \cite{AS1}.

P\'{o}lya's permanent problem asks when it is possible to relate the determinant of a matrix to its \emph{permanent}, which is formed by adding the terms in the standard expansion of the determinant without their associated permutation-signs attached.  This question is not directly one of injectivity, but as in section~\ref{CFsubsection1}, it is very much related to the interplay between matrices with signed elements and their related graphs \cite{BMQ}.  In fact, the problem has appeared in a number of contexts and forms, even including the calculation of resonance structures in organic chemistry \cite{Orgo}. Combinatorial extensions of the problem have been treated extensively in the graph theory literature \cite{Mcc1, Mcc2}, especially with regard to conditions for the existence of various cycles in a number of different kinds of graphs.

\subsection{Craciun-Feinberg and Gale-Nikaid\^{o}}

We have already seen how Craciun and Feinberg's theorem can imply that of Thomas-Soul\'{e} in the case of CRNs.  Here, we show how Gale-Nikaid\^{o} can be used to prove the second half of Craciun-Feinberg.

Suppose that the cycle conditions of Theorem~\ref{CFlemma1} are met for a given CRN, implying that all of the determinants $\det([\mathbf{e_{j_1}}, \dots, \mathbf{e_{j_n}}])$ have the same sign.  This fact depends only on the network's structure and not at all on its rate constants.  Now, consider any subgraph of the interaction graph corresponding to a subset of $N$ of the species.  Any cycles in this subgraph are also present in the full graph, which means that the hypotheses of the theorem are still satisfied, and hence there is no way to generate coefficients with a sign of $(-1)^{N-1}$ in the principal minor corresponding to this subset.  (Because of the diagonal term, the minor is strictly nonzero.)  Thus, the matrix $\mathbf{J_{-F}}$ is a P-matrix, and by Gale-Nikaid\^{o}, the full network is injective.


This proof does not use Gale-Nikaid\^{o} to its full potential, since the hypotheses of Gale-Nikaid\^{o} do not require that minors be nonzero for all values of the rate constants.  Ideally, one would be able to find conditions, cycle-based or otherwise, that would give nonvanishing minors for a specific set or region of rate constants, even for a network that has some \textquotedblleft bad" cycles in it under Craciun-Feinberg.

\subsection{A note on reversible reactions and related issues}\label{tricky}

While the Thomas-Soul\'{e} theorem gives an elegant necessary condition for multistability, its power in the case of CRNs is limited by the prevalence of positive cycles.  Any reversible reaction, for example, even $A \leftrightharpoons B$, gives rise to a positive cycle, with positive edges from $A$ to $B$ and vice versa.  Regardless of the outflow properties, though, this reaction only has one equilibrium.  Even a single reaction $A + B \to C$, for that matter, yields a positive cycle, this time with negative edges between $A$ and $B$.  These \textquotedblleft trivial" examples might make us pessimistic about the applicability of any of our theorems, but, luckily, Craciun-Feinberg provides some relief. 

Let us assume degradation or outflow of all species, so that the Craciun-Feinberg results apply.  We have seen in section~\ref{CFsubsection1} that we can rearrange the expansion of $\det(\mathbf{J_F})$ such that the coefficient of each term is of the form $\det([\mathbf{c_{j_1}}, \dots, \mathbf{c_{j_n}}])\cdot \det([\mathbf{e_{j_1}}, \dots, \mathbf{e_{j_n}}])$.  As long as the signs of these coefficients are all the same, the CRN will be injective.  If we form a cycle with any repetition among the indices $j_1, j_2, \dots, j_N$, then the first of these determinants will be zero, since two rows will be identical.\footnote{In the context of their theorem, this issue never arises, due to the strict assignments of reaction indices $j_h$ (see section~\ref{CFsubsection1}).}  Thus we can declare that the $A + B \to C$ issue in the previous paragraph can be ignored.  

With regard to reversible reactions, Craciun and Feinberg combine both directions into a single reaction-node in their graphs, which eliminates the possibility of using a reaction and its reverse within the same cycle.  Within our framework, we can justify simply ignoring such cycles by noting that they lead to the presence of two rows in the second matrix above that differ by a factor of -1 throughout (and are thus linearly dependent).  Graphically, we can incorporate this exclusion by drawing single, two-sided arrows for reversible reactions in the interaction graph.

Of course, these trivial examples are not the only ways in which one of the determinant-coefficients above may vanish due to linear dependence.  All that is necessary is the alignment of a few stoichiometric coefficients.  In general, though, this will be hard to detect without checking individual cycles by hand.

As a final note, while Craciun and Feinberg's statment of their theorem eliminates these two forms of cycles, it simultaneously introduces at least two other potential trivialities.  First, their main theorem uses the non-oriented SR graph, some of whose cycles might not be true, oriented cycles.  Second, their method for drawing edges in the SR graph allows false edges between species that both appear on the product side of a given reaction, for example in the \textquotedblleft cycle" $A \Rightarrow B \Rightarrow A$ arising from the reaction $C \to A + B$.

\section{Multisite protein phosphorylation}\label{MAPK}

Perhaps the most-studied example of a bistable system in biochemistry is the eukaryotic cell-signaling module based on mitogen-activated protein kinase (MAPK) \cite{AS1, ASmulti, AFS, Bhalla, FX, JG1, HuangFerrell, Markevich}.  A \emph{kinase} is an enzyme that acts by attaching a phosphate group, usually taken from a molecule of ATP, to another molecule in order to effect a change in that molecule's chemical behavior.  An enzyme that removes a phosphate group is referred to as a \emph{phosphatase}. When abstracted, the key feature of the MAPK system is that MAPK is itself regulated by phosphorylation, and moreover, that it has two distinct phosphorylation loci.  Let us write $S_i$ for an $i$-times phosphorylated molecule of MAPK, for $i = 0, 1, 2$; $E$ and $F$ for the kinase and phosphatase that act on the $S_i$; and $ES_i$ and~$FS_i$ for the various intermediate enzyme-substrate complexes.  Part of the beauty of the MAPK system is that it can be accurately modeled using nothing more than repeated applications of Michaelis-Menten enzyme action \cite{Bhalla, JG1, HuangFerrell, Markevich}.  Thus, the full mechanism is the following.

\[\begin{array}{ll}
1, 2. & E + S_0 \leftrightharpoons ES_0 \\
3. & ES_0 \to E + S_1 \\
4, 5. & E + S_1 \leftrightharpoons ES_1 \\
6. & ES_1 \to E + S_2 \\
7, 8. & F + S_2 \leftrightharpoons FS_2 \\
9. & FS_2 \to F + S_1 \\
10, 11. & F + S_1 \leftrightharpoons FS_1 \\
12. & FS_1 \to F + S_0. 
\end{array}\]

Many of the authors cited above have noted the connection between MAPK cascades and positive feedback, and indeed, we can see that the network contains a multitude of positive and sub-positive cycles, at all levels of the structure.  Each of the four phosphorylation and dephosphorylation events (reactions 1-3, 4-6, 7-9, and 10-12) contains its own two-element sub-positive cycle, for example $E \Rightarrow ES_0 \Rightarrow E$ using reactions 1 and 3.  We can also split the process into events at the first site (reactions 1-3 and 10-12) and those at the second site (4-9), and here we find the cycles $S_0 \Rightarrow ES_0 \Rightarrow S_1 \Rightarrow FS_1 \Rightarrow S_0$ (reactions 1, 3, 10, 12) and $S_1 \Rightarrow ES_1 \Rightarrow S_2 \Rightarrow FS_2 \Rightarrow S_1$ (reactions 4, 6, 7, 9).  These last two are significant because they strongly intersect with the sub-positive cycles $S_0 \Rightarrow E \Rightarrow S_1 \Rightarrow F \Rightarrow S_0$ (reactions 1, 4, 10, 12) and $S_2 \Rightarrow F \Rightarrow S_1 \Rightarrow ES_1 \Rightarrow S_2$ (reactions 7, 10, 4, 6), respectively.  While we cannot use Theorem~\ref{CFlemma1} to prove multistability based on these intersecting cycles, we can see, given that multistability is present, that they are responsible.

By contrast, consider a protein that is phosphorylated at only one site, with a mechanism consisting of reactions 1-3 and 10-12.  There are still positive cycles, so we cannot apply Thomas-Soul\'{e}, but now, the last two cycles listed above no longer exist, and in fact, there are no longer any strongly intersecting sub-positive cycles.  Thus, by Craciun-Feinberg, the one-site system is injective.

\section{Concluding remarks}

Multistability theory can be both delicate and forgiving.  Through an investigation of several approaches to the theory, we have seen that, just as multistable systems can be perturbed from one equilibrium state to another, slight differences in structure can push a network from the realm of the injective to the realm of the multistable.  At the same time, it is remarkable that the theorems we have examined, in particular that of Craciun and Feinberg, allow us to conclude so much from so little.



\medskip
\renewcommand\abstractname{Acknowledgements}
\begin{abstract}

I would like to thank Jeremy Gunawardena for generously supporting my work and supplying indispensable advice and feedback - both positive and negative as appropriate - throughout the process; Clifford Taubes for his practical guidance and his recommendations for the paper; and Adam Levine, Susan Lipson, and Rebecca Zeidel for their many useful comments on unfinished drafts.

\end{abstract}

\end{document}